\newcommand\tiksize{0.12}
\newcommand{\lt}[1]{\stackrel{#1}{\rightarrow}}
\def\BState{\State\hskip-\ALG@thistlm}
\DeclarePairedDelimiter\abs{\lvert}{\rvert}
\DeclarePairedDelimiter\angles{\langle}{\rangle}
\DeclarePairedDelimiter\sem{[\![}{]\!]}
\let\phi\varphi
\let\rho\varrho
\let\sat\vDash
\let\unsat\nvDash
\let\origsubsubsection\subsubsection
\renewcommand\subsubsection{\@ifstar{\starsubsubsection}{\nostarsubsubsection}}
\newcommand\nostarsubsubsection[1]
\subsubsectionprelude\origsubsubsection{#1}\subsubsectionpostlude}
\newcommand\starsubsubsection[1]
\newcommand\subsubsectionprelude{%
  \vspace{-1.1em}
}
\newcommand\subsubsectionpostlude{%
  \vspace{0pt}
}
\let\subsubsection\starsubsubsection
\newcommand{\algo}{\textsc{gcg}}
\newcommand{\algov}[3]{\algo\ensuremath{[#1,#2,#3]}}
\newcommand{\algopr}{\algo\ensuremath{_\simeq}}
\newcommand{\algovpr}[3]{\algopr\ensuremath{[#1,#2,#3]}}
\newcommand{\asgn}{\textsc{asgn}}
\newcommand{\SATnv}{\textsc{sat}}
\newcommand{\SAT}[1]{\ensuremath{\SATnv(#1)}}
\newcommand{\MC}{\textsc{mc}}
\newcommand{\gen}{\gamma}
\newcommand{\perm}{\pi}
\newcommand{\eqcl}[1]{\ensuremath{[#1]}}
\newcommand{\trans}[3]{#1\lt{#2}#3}
\newcommand{\bvar}[4]{#2\looparrowright^{#3}_{#1}#4}
\newcommand{\Sol}{{\mathsf{Sol}}}
\newcommand{\Cand}{{\mathsf{Cand}}}
\newcommand{\Pruned}{{\mathsf{Pruned}}}
\newcommand{\pru}[1]{\ensuremath{#1\in\Pruned}}
\newcommand{\notpru}[1]{\ensuremath{#1\notin\Pruned}}
\newcommand{\sol}[1]{\ensuremath{#1\in\Sol}}
\newcommand{\notsol}[1]{\ensuremath{#1\notin\Sol}}
\newcommand\anonymize
\title{Synthesis of Distributed Protocols by Enumeration Modulo Isomorphisms}
\author{Derek Egolf\textsuperscript{\tiny(\Letter)}
\and
Stavros Tripakis
}
\authorrunning{D. Egolf \& S. Tripakis}
\institute{
Northeastern University, Boston, MA, USA\\
\email{\{egolf.d, stavros\}@northeastern.edu}}
\date{\today}
\begin{document}
\maketitle

\begin{abstract}
Synthesis of distributed protocols is a hard, often undecidable, problem.
{\em Completion} techniques
provide partial remedy by turning the problem into a search problem.
However, the space of candidate completions is still massive.
In this paper, we propose  optimization techniques to reduce the size of the search space by a factorial factor by exploiting symmetries
({\em isomorphisms})
in functionally equivalent solutions. We present both a theoretical analysis of this optimization as well as empirical results that demonstrate its effectiveness in synthesizing
both the Alternating Bit Protocol and Two Phase Commit.
Our experiments show that the optimized tool achieves a speedup of approximately 2 to 10 times compared to its unoptimized counterpart.
\end{abstract}

\section{Introduction}
\label{sec:intro}

Distributed protocols are at the heart of the internet, data centers, cloud services, and other types of infrastructure considered indispensable in a modern society.
Yet distributed protocols are also notoriously difficult to get right, and have therefore been one of the primary application domains of formal verification~\cite{Holzmann91,LamportTLAbook2002,Lynch96,NewcombeAmazon2015,ZaveChord2017}.
An even more attractive proposition is distributed protocol {\em synthesis}: given a formal correctness specification $\psi$, automatically generate a distributed protocol that satisfies $\psi$, i.e., that is {\em correct-by-construction}.

Synthesis is a hard problem in general, suffering, like formal verification, from scalability and similar issues. Moreover, for distributed systems, synthesis is generally
undecidable~\cite{FinkbeinerScheweSTTT13,PnueliRosner90,Thistle2005,TripakisIPL}.
Techniques such as program {\em sketching}~\cite{SolarLezama2006,SolarLezamaSTTT13}
remedy scalability and undecidability concerns essentially by turning the synthesis problem into a {\em completion}
problem~\cite{ScenariosHVC2014,CompletionCAV2015}:
given an {\em incomplete} system $M_0$ and a specification $\psi$, automatically synthesize a completion $M$ of $M_0$, such that $M$ satisfies $\psi$.

For example, the synthesis of the well-known {\em alternating-bit protocol} (ABP) is considered in~\cite{sigact2017} as a completion problem:
given an ABP system containing the incomplete $\textit{Sender}_0$ and $\textit{Receiver}_0$ processes shown in Fig.~\ref{fig:sigact-fig14},
complete these two processes (by adding but not removing any transitions, and not adding nor removing any states), so that the system satisfies a given set of requirements.

In cases where the space of all possible completions is finite, completion turns synthesis into a decidable problem.\footnote{
We emphasize that no generality is lost in the sense that one can augment the search for correct completions with an outer loop that keeps adding extra {\em empty} states (with no incoming or outgoing transitions), which the inner completion procedure then tries to complete. Thus, we can keep searching for progressively larger systems (in terms of number of states) until a solution is found, if one exists.
}
However, even then, the number of possible completions can be prohibitively large, even for relatively simple protocols.
For instance, as explained in~\cite{sigact2017}, the number of all possible completions in the ABP example is $512^4\cdot 36$, i.e., approximately 2.5 trillion candidate completions.

\begin{figure}
    \begin{flushright}
\scalebox{0.8}{
\begin{tikzpicture}[scale=\tiksize]
\tikzstyle{every node}+=[inner sep=0pt]
\draw [black] (19.3,-32.4) circle (3);
\draw (19.3,-32.4) node {$r_5$};
\draw [black] (37.7,-32.4) circle (3);
\draw (37.7,-32.4) node {$r_4$};
\draw [black] (55.9,-32.4) circle (3);
\draw (55.9,-32.4) node {$r_3$};
\draw [black] (37.7,-20.4) circle (3);
\draw (37.7,-20.4) node {$r_1$};
\draw [black] (19.3,-20.4) circle (3);
\draw (19.3,-20.4) node {$r_0$};
\draw [black] (55.9,-20.4) circle (3);
\draw (55.9,-20.4) node {$r_2$};
\draw [black] (13.1,-20.4) -- (16.3,-20.4);
\draw (12.6,-20.4) node [left] {$\textit{Receiver}_{0}:=$};
\fill [black] (16.3,-20.4) -- (15.5,-19.9) -- (15.5,-20.9);
\draw [black] (22.3,-20.4) -- (34.7,-20.4);
\fill [black] (34.7,-20.4) -- (33.9,-19.9) -- (33.9,-20.9);
\draw (28.5,-19.9) node [above] {$p'_0?$};
\draw [black] (40.7,-20.4) -- (52.9,-20.4);
\fill [black] (52.9,-20.4) -- (52.1,-19.9) -- (52.1,-20.9);
\draw (46.8,-19.9) node [above] {$\textit{deliver}!$};
\draw [black] (55.9,-23.4) -- (55.9,-29.4);
\fill [black] (55.9,-29.4) -- (56.4,-28.6) -- (55.4,-28.6);
\draw (56.4,-26.4) node [right] {$a_0!$};
\draw [black] (52.9,-32.4) -- (40.7,-32.4);
\fill [black] (40.7,-32.4) -- (41.5,-32.9) -- (41.5,-31.9);
\draw (46.8,-32.9) node [below] {$p'_1?$};
\draw [black] (34.7,-32.4) -- (22.3,-32.4);
\fill [black] (22.3,-32.4) -- (23.1,-32.9) -- (23.1,-31.9);
\draw (28.5,-32.9) node [below] {$\textit{deliver}!$};
\draw [black] (19.3,-29.4) -- (19.3,-23.4);
\fill [black] (19.3,-23.4) -- (18.8,-24.2) -- (19.8,-24.2);
\draw (18.8,-26.4) node [left] {$a_1!$};
\end{tikzpicture}
}
    \end{flushright}
    \vspace*{-20pt}

\scalebox{0.8}{
\begin{tikzpicture}[scale=\tiksize]
\tikzstyle{every node}+=[inner sep=0pt]
\draw [black] (3.8,-29.2) circle (3);
\draw (3.8,-29.2) node {$s_0$};
\draw [black] (22,-17.1) circle (3);
\draw (22,-17.1) node {$s_1$};
\draw [black] (40.3,-17.1) circle (3);
\draw (40.3,-17.1) node {$s_2$};
\draw [black] (58.6,-17.1) circle (3);
\draw (58.6,-17.1) node {$s_3$};
\draw [black] (76.9,-29.2) circle (3);
\draw (76.9,-29.2) node {$s_4$};
\draw [black] (22,-41.4) circle (3);
\draw (22,-41.4) node {$s_7$};
\draw [black] (40.3,-41.4) circle (3);
\draw (40.3,-41.4) node {$s_6$};
\draw [black] (58.6,-41.4) circle (3);
\draw (58.6,-41.4) node {$s_5$};
\draw [black] (6.3,-27.54) -- (19.5,-18.76);
\fill [black] (19.5,-18.76) -- (18.56,-18.79) -- (19.11,-19.62);
\draw (15.68,-23.65) node [below] {\textit{send}?};
\draw [black] (25,-17.1) -- (37.3,-17.1);
\fill [black] (37.3,-17.1) -- (36.5,-16.6) -- (36.5,-17.6);
\draw (31.15,-16.6) node [above] {$p_0!$};
\draw [black] (43.3,-17.1) -- (55.6,-17.1);
\fill [black] (55.6,-17.1) -- (54.8,-16.6) -- (54.8,-17.6);
\draw (49.45,-16.6) node [above] {$a'_0?$};
\draw [black] (61.1,-18.75) -- (74.4,-27.55);
\fill [black] (74.4,-27.55) -- (74.01,-26.69) -- (73.45,-27.52);
\draw (64.97,-23.65) node [below] {\textit{done}!};
\draw [black] (74.4,-30.86) -- (61.1,-39.74);
\fill [black] (61.1,-39.74) -- (62.04,-39.71) -- (61.48,-38.88);
\draw (64.97,-34.8) node [above] {\textit{send}?};
\draw [black] (55.6,-41.4) -- (43.3,-41.4);
\fill [black] (43.3,-41.4) -- (44.1,-41.9) -- (44.1,-40.9);
\draw (49.45,-41.9) node [below] {$p_1!$};
\draw [black] (37.3,-41.4) -- (25,-41.4);
\fill [black] (25,-41.4) -- (25.8,-41.9) -- (25.8,-40.9);
\draw (31.15,-41.9) node [below] {$a'_1?$};
\draw [black] (19.51,-39.73) -- (6.29,-30.87);
\fill [black] (6.29,-30.87) -- (6.68,-31.73) -- (7.23,-30.9);
\draw (15.68,-34.8) node [above] {\textit{done}!};
\draw [black] (-2.6,-29.2) -- (0.8,-29.2);
\draw (-3.1,-29.2) node [left] {$\textit{Sender}_{0}\mbox{}:=$};
\fill [black] (0.8,-29.2) -- (0,-28.7) -- (0,-29.7);
\end{tikzpicture}
}
    \caption{The incomplete ABP Sender and Receiver processes of~\cite{sigact2017}}
    \label{fig:sigact-fig14}
\end{figure}
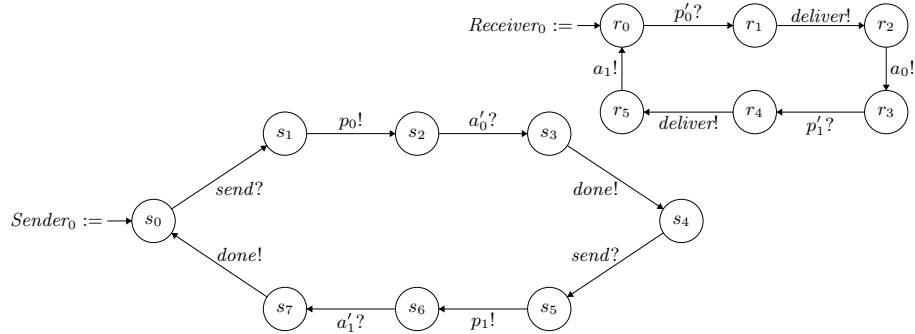

Not only is the number of candidate completions typically huge, but it is often also interesting to generate not just one correct completion, but many.
For instance, suppose both $M_1$ and $M_2$ are (functionally) correct solutions.
We may want to evaluate $M_1$ and $M_2$ also for {\em efficiency} (perhaps using a separate method)~\cite{egolf2022decoupled}.
In general, we may want to synthesize (and then evaluate w.r.t. performance or other metrics) not just one, but in principle {\em all} correct completions.
We call this problem the completion {\em  enumeration} problem, which is the main focus of this paper.

Enumeration is harder than {\em 1-completion} (synthesis of just one correct solution), since the number of correct solutions might be very large.
For instance, in the case of the ABP example described above, the number of correct completions is 16384 and it takes 88 minutes to generate all of them~\cite{sigact2017}.

\begin{figure}

\begin{center}
\scalebox{0.9}{
\begin{tikzpicture}[scale=\tiksize]
\tikzstyle{every node}+=[inner sep=0pt]
\draw [black] (3.8,-29.2) circle (3);
\draw (3.8,-29.2) node {$s_0$};
\draw [black] (22,-17.1) circle (3);
\draw (22,-17.1) node {$s_1$};
\draw [black] (40.3,-17.1) circle (3);
\draw (40.3,-17.1) node {$s_2$};
\draw [black] (58.6,-17.1) circle (3);
\draw (58.6,-17.1) node {$s_3$};
\draw [black] (76.9,-29.2) circle (3);
\draw (76.9,-29.2) node {$s_4$};
\draw [black] (22,-41.4) circle (3);
\draw (22,-41.4) node {$s_7$};
\draw [black] (40.3,-41.4) circle (3);
\draw (40.3,-41.4) node {$s_6$};
\draw [black] (58.6,-41.4) circle (3);
\draw (58.6,-41.4) node {$s_5$};
\draw [black] (2.477,-26.52) arc (234:-54:2.25);
\draw (3.8,-21.95) node [above] {$a'_0?,\mbox{ }a'_1?$};
\fill [black] (5.12,-26.52) -- (6,-26.17) -- (5.19,-25.58);
\draw [black] (6.3,-27.54) -- (19.5,-18.76);
\fill [black] (19.5,-18.76) -- (18.56,-18.79) -- (19.11,-19.62);
\draw (15.68,-23.65) node [below] {\textit{send}?};
\draw [black] (24.423,-15.341) arc (119.66327:60.33673:13.593);
\fill [black] (37.88,-15.34) -- (37.43,-14.51) -- (36.93,-15.38);
\draw (31.15,-13.06) node [above] {$p_0!$};
\draw [black] (74.4,-30.86) -- (61.1,-39.74);
\fill [black] (61.1,-39.74) -- (62.04,-39.71) -- (61.48,-38.88);
\draw (64.97,-34.8) node [above] {\textit{send}?};
\draw [black] (42.7,-39.611) arc (120.28318:59.71682:13.385);
\fill [black] (42.7,-39.61) -- (43.64,-39.64) -- (43.14,-38.78);
\draw (49.45,-37.28) node [above] {$p_1!$};
\draw [black] (37.877,-18.859) arc (-60.33673:-119.66327:13.593);
\fill [black] (24.42,-18.86) -- (24.87,-19.69) -- (25.37,-18.82);
\draw (31.15,-21.14) node [below] {\textit{timeout}?};
\draw [black] (56.2,-43.189) arc (-59.71682:-120.28318:13.385);
\fill [black] (56.2,-43.19) -- (55.26,-43.16) -- (55.76,-44.02);
\draw (49.45,-45.52) node [below] {\textit{timeout}?};
\draw [black] (5.123,-31.88) arc (54:-234:2.25);
\draw (3.8,-36.45) node [below] {\textit{timeout}?};
\fill [black] (2.48,-31.88) -- (1.6,-32.23) -- (2.41,-32.82);
\draw [black] (41.623,-44.08) arc (54:-234:2.25);
\draw (40.3,-48.65) node [below] {\textit{send}?};
\fill [black] (38.98,-44.08) -- (38.1,-44.43) -- (38.91,-45.02);
\draw [black] (38.977,-38.72) arc (234:-54:2.25);
\draw (40.3,-34.15) node [above] {$a'_0?$};
\fill [black] (41.62,-38.72) -- (42.5,-38.37) -- (41.69,-37.78);
\draw [black] (78.223,-31.88) arc (54:-234:2.25);
\draw (76.9,-36.45) node [below] {\textit{timeout}?};
\fill [black] (75.58,-31.88) -- (74.7,-32.23) -- (75.51,-32.82);
\draw [black] (75.577,-26.52) arc (234:-54:2.25);
\draw (76.9,-21.95) node [above] {$a'_0?,\mbox{ }a'_1?$};
\fill [black] (78.22,-26.52) -- (79.1,-26.17) -- (78.29,-25.58);
\draw [black] (38.977,-14.42) arc (234:-54:2.25);
\draw (40.3,-9.85) node [above] {$a'_1?$};
\fill [black] (41.62,-14.42) -- (42.5,-14.07) -- (41.69,-13.48);
\draw [black] (41.623,-19.78) arc (54:-234:2.25);
\draw (40.3,-24.35) node [below] {\textit{send}?};
\fill [black] (38.98,-19.78) -- (38.1,-20.13) -- (38.91,-20.72);
\draw [black] (-2.6,-29.2) -- (0.8,-29.2);
\draw (-3.1,-29.2) node [left] {$\textit{Sender}_0\mbox{ }:=$};
\fill [black] (0.8,-29.2) -- (0,-28.7) -- (0,-29.7);
\end{tikzpicture}
}
\end{center}
    \vspace*{-20pt}
    \caption{An incomplete ABP Sender with permutable states $s_3,s_7$}%
    \label{fig:sndr_tmpl}
\end{figure}
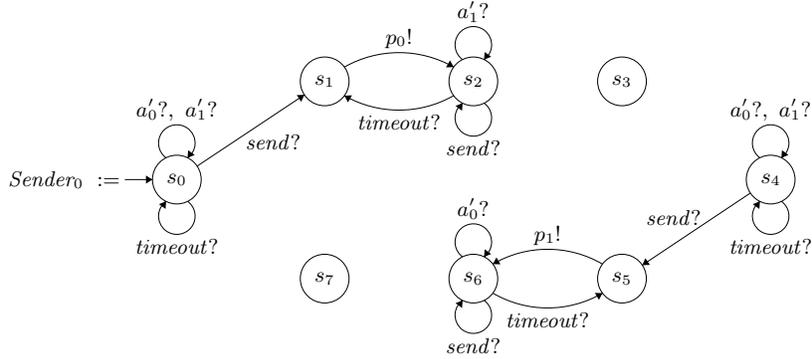

The key idea in this paper is to exploit the notion of {\em isomorphisms} in order to reduce the number of correct completions, as well as the search space of candidate completions in general.
To illustrate the idea, consider a different incomplete $\textit{Sender}_0$ process, shown in Fig.~\ref{fig:sndr_tmpl}.
Two possible completions of this $\textit{Sender}_0$ are shown in Fig.~\ref{fig:sndr_complete}.
Although these two completions are in principle different, they are identical except that states $s_3$ and $s_7$ are swapped.
Our goal is to develop a technique which considers these two completions {\em equivalent up to isomorphism}, and only explores (and returns) one of them.

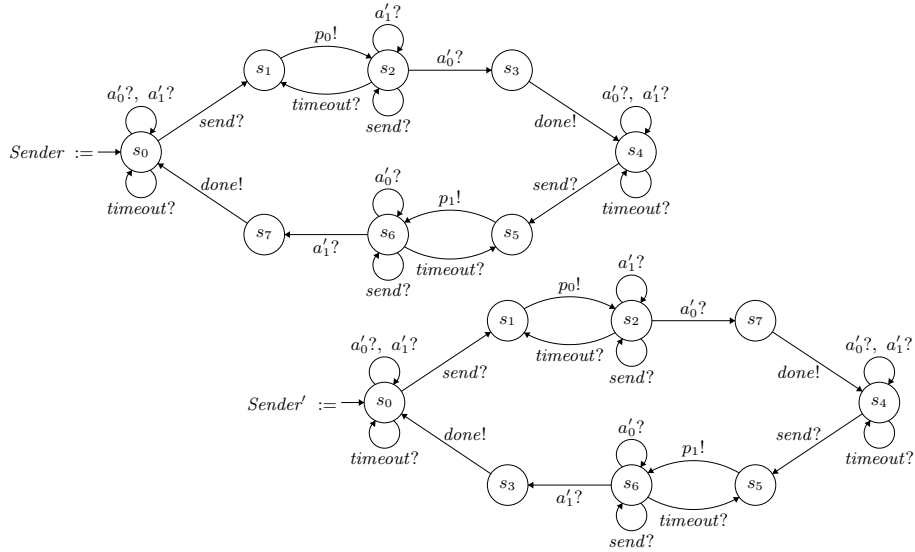
\begin{figure}

\scalebox{0.75}{
\begin{tikzpicture}[scale=\tiksize]
\tikzstyle{every node}+=[inner sep=0pt]
\draw [black] (3.8,-29.2) circle (3);
\draw (3.8,-29.2) node {$s_0$};
\draw [black] (22,-17.1) circle (3);
\draw (22,-17.1) node {$s_1$};
\draw [black] (40.3,-17.1) circle (3);
\draw (40.3,-17.1) node {$s_2$};
\draw [black] (58.6,-17.1) circle (3);
\draw (58.6,-17.1) node {$s_3$};
\draw [black] (76.9,-29.2) circle (3);
\draw (76.9,-29.2) node {$s_4$};
\draw [black] (22,-41.4) circle (3);
\draw (22,-41.4) node {$s_7$};
\draw [black] (40.3,-41.4) circle (3);
\draw (40.3,-41.4) node {$s_6$};
\draw [black] (58.6,-41.4) circle (3);
\draw (58.6,-41.4) node {$s_5$};
\draw [black] (2.477,-26.52) arc (234:-54:2.25);
\draw (3.8,-21.95) node [above] {$a'_0?,\mbox{ }a'_1?$};
\fill [black] (5.12,-26.52) -- (6,-26.17) -- (5.19,-25.58);
\draw [black] (6.3,-27.54) -- (19.5,-18.76);
\fill [black] (19.5,-18.76) -- (18.56,-18.79) -- (19.11,-19.62);
\draw (15.68,-23.65) node [below] {\textit{send}?};
\draw [black] (24.423,-15.341) arc (119.66327:60.33673:13.593);
\fill [black] (37.88,-15.34) -- (37.43,-14.51) -- (36.93,-15.38);
\draw (31.15,-13.06) node [above] {$p_0!$};
\draw [black] (43.3,-17.1) -- (55.6,-17.1);
\fill [black] (55.6,-17.1) -- (54.8,-16.6) -- (54.8,-17.6);
\draw (49.45,-16.6) node [above] {$a'_0?$};
\draw [black] (61.1,-18.75) -- (74.4,-27.55);
\fill [black] (74.4,-27.55) -- (74.01,-26.69) -- (73.45,-27.52);
\draw (64.97,-23.65) node [below] {\textit{done}!};
\draw [black] (74.4,-30.86) -- (61.1,-39.74);
\fill [black] (61.1,-39.74) -- (62.04,-39.71) -- (61.48,-38.88);
\draw (64.97,-34.8) node [above] {\textit{send}?};
\draw [black] (42.7,-39.611) arc (120.28318:59.71682:13.385);
\fill [black] (42.7,-39.61) -- (43.64,-39.64) -- (43.14,-38.78);
\draw (49.45,-37.28) node [above] {$p_1!$};
\draw [black] (37.3,-41.4) -- (25,-41.4);
\fill [black] (25,-41.4) -- (25.8,-41.9) -- (25.8,-40.9);
\draw (31.15,-41.9) node [below] {$a'_1?$};
\draw [black] (19.51,-39.73) -- (6.29,-30.87);
\fill [black] (6.29,-30.87) -- (6.68,-31.73) -- (7.23,-30.9);
\draw (15.68,-34.8) node [above] {\textit{done}!};
\draw [black] (37.877,-18.859) arc (-60.33673:-119.66327:13.593);
\fill [black] (24.42,-18.86) -- (24.87,-19.69) -- (25.37,-18.82);
\draw (31.15,-21.14) node [below] {\textit{timeout}?};
\draw [black] (56.2,-43.189) arc (-59.71682:-120.28318:13.385);
\fill [black] (56.2,-43.19) -- (55.26,-43.16) -- (55.76,-44.02);
\draw (49.45,-45.52) node [below] {\textit{timeout}?};
\draw [black] (5.123,-31.88) arc (54:-234:2.25);
\draw (3.8,-36.45) node [below] {\textit{timeout}?};
\fill [black] (2.48,-31.88) -- (1.6,-32.23) -- (2.41,-32.82);
\draw [black] (41.623,-44.08) arc (54:-234:2.25);
\draw (40.3,-48.65) node [below] {\textit{send}?};
\fill [black] (38.98,-44.08) -- (38.1,-44.43) -- (38.91,-45.02);
\draw [black] (38.977,-38.72) arc (234:-54:2.25);
\draw (40.3,-34.15) node [above] {$a'_0?$};
\fill [black] (41.62,-38.72) -- (42.5,-38.37) -- (41.69,-37.78);
\draw [black] (78.223,-31.88) arc (54:-234:2.25);
\draw (76.9,-36.45) node [below] {\textit{timeout}?};
\fill [black] (75.58,-31.88) -- (74.7,-32.23) -- (75.51,-32.82);
\draw [black] (75.577,-26.52) arc (234:-54:2.25);
\draw (76.9,-21.95) node [above] {$a'_0?,\mbox{ }a'_1?$};
\fill [black] (78.22,-26.52) -- (79.1,-26.17) -- (78.29,-25.58);
\draw [black] (38.977,-14.42) arc (234:-54:2.25);
\draw (40.3,-9.85) node [above] {$a'_1?$};
\fill [black] (41.62,-14.42) -- (42.5,-14.07) -- (41.69,-13.48);
\draw [black] (41.623,-19.78) arc (54:-234:2.25);
\draw (40.3,-24.35) node [below] {\textit{send}?};
\fill [black] (38.98,-19.78) -- (38.1,-20.13) -- (38.91,-20.72);
\draw [black] (-2.6,-29.2) -- (0.8,-29.2);
\draw (-3.1,-29.2) node [left] {$\textit{Sender}\mbox{ }:=$};
\fill [black] (0.8,-29.2) -- (0,-28.7) -- (0,-29.7);
\end{tikzpicture}
}
    \vspace*{-25pt}
    \begin{flushright}

\scalebox{0.75}{
\begin{tikzpicture}[scale=\tiksize]
\tikzstyle{every node}+=[inner sep=0pt]
\draw [black] (3.8,-29.2) circle (3);
\draw (3.8,-29.2) node {$s_0$};
\draw [black] (22,-17.1) circle (3);
\draw (22,-17.1) node {$s_1$};
\draw [black] (40.3,-17.1) circle (3);
\draw (40.3,-17.1) node {$s_2$};
\draw [black] (58.6,-17.1) circle (3);
\draw (58.6,-17.1) node {$s_7$};
\draw [black] (76.9,-29.2) circle (3);
\draw (76.9,-29.2) node {$s_4$};
\draw [black] (22,-41.4) circle (3);
\draw (22,-41.4) node {$s_3$};
\draw [black] (40.3,-41.4) circle (3);
\draw (40.3,-41.4) node {$s_6$};
\draw [black] (58.6,-41.4) circle (3);
\draw (58.6,-41.4) node {$s_5$};
\draw [black] (2.477,-26.52) arc (234:-54:2.25);
\draw (3.8,-21.95) node [above] {$a'_0?,\mbox{ }a'_1?$};
\fill [black] (5.12,-26.52) -- (6,-26.17) -- (5.19,-25.58);
\draw [black] (6.3,-27.54) -- (19.5,-18.76);
\fill [black] (19.5,-18.76) -- (18.56,-18.79) -- (19.11,-19.62);
\draw (15.68,-23.65) node [below] {\textit{send}?};
\draw [black] (24.423,-15.341) arc (119.66327:60.33673:13.593);
\fill [black] (37.88,-15.34) -- (37.43,-14.51) -- (36.93,-15.38);
\draw (31.15,-13.06) node [above] {$p_0!$};
\draw [black] (43.3,-17.1) -- (55.6,-17.1);
\fill [black] (55.6,-17.1) -- (54.8,-16.6) -- (54.8,-17.6);
\draw (49.45,-16.6) node [above] {$a'_0?$};
\draw [black] (61.1,-18.75) -- (74.4,-27.55);
\fill [black] (74.4,-27.55) -- (74.01,-26.69) -- (73.45,-27.52);
\draw (64.97,-23.65) node [below] {\textit{done}!};
\draw [black] (74.4,-30.86) -- (61.1,-39.74);
\fill [black] (61.1,-39.74) -- (62.04,-39.71) -- (61.48,-38.88);
\draw (64.97,-34.8) node [above] {\textit{send}?};
\draw [black] (42.7,-39.611) arc (120.28318:59.71682:13.385);
\fill [black] (42.7,-39.61) -- (43.64,-39.64) -- (43.14,-38.78);
\draw (49.45,-37.28) node [above] {$p_1!$};
\draw [black] (37.3,-41.4) -- (25,-41.4);
\fill [black] (25,-41.4) -- (25.8,-41.9) -- (25.8,-40.9);
\draw (31.15,-41.9) node [below] {$a'_1?$};
\draw [black] (19.51,-39.73) -- (6.29,-30.87);
\fill [black] (6.29,-30.87) -- (6.68,-31.73) -- (7.23,-30.9);
\draw (15.68,-34.8) node [above] {\textit{done}!};
\draw [black] (37.877,-18.859) arc (-60.33673:-119.66327:13.593);
\fill [black] (24.42,-18.86) -- (24.87,-19.69) -- (25.37,-18.82);
\draw (31.15,-21.14) node [below] {\textit{timeout}?};
\draw [black] (56.2,-43.189) arc (-59.71682:-120.28318:13.385);
\fill [black] (56.2,-43.19) -- (55.26,-43.16) -- (55.76,-44.02);
\draw (49.45,-45.52) node [below] {\textit{timeout}?};
\draw [black] (5.123,-31.88) arc (54:-234:2.25);
\draw (3.8,-36.45) node [below] {\textit{timeout}?};
\fill [black] (2.48,-31.88) -- (1.6,-32.23) -- (2.41,-32.82);
\draw [black] (41.623,-44.08) arc (54:-234:2.25);
\draw (40.3,-48.65) node [below] {\textit{send}?};
\fill [black] (38.98,-44.08) -- (38.1,-44.43) -- (38.91,-45.02);
\draw [black] (38.977,-38.72) arc (234:-54:2.25);
\draw (40.3,-34.15) node [above] {$a'_0?$};
\fill [black] (41.62,-38.72) -- (42.5,-38.37) -- (41.69,-37.78);
\draw [black] (78.223,-31.88) arc (54:-234:2.25);
\draw (76.9,-36.45) node [below] {\textit{timeout}?};
\fill [black] (75.58,-31.88) -- (74.7,-32.23) -- (75.51,-32.82);
\draw [black] (75.577,-26.52) arc (234:-54:2.25);
\draw (76.9,-21.95) node [above] {$a'_0?,\mbox{ }a'_1?$};
\fill [black] (78.22,-26.52) -- (79.1,-26.17) -- (78.29,-25.58);
\draw [black] (38.977,-14.42) arc (234:-54:2.25);
\draw (40.3,-9.85) node [above] {$a'_1?$};
\fill [black] (41.62,-14.42) -- (42.5,-14.07) -- (41.69,-13.48);
\draw [black] (41.623,-19.78) arc (54:-234:2.25);
\draw (40.3,-24.35) node [below] {\textit{send}?};
\fill [black] (38.98,-19.78) -- (38.1,-20.13) -- (38.91,-20.72);
\draw [black] (-2.6,-29.2) -- (0.8,-29.2);
\draw (-3.1,-29.2) node [left] {$\textit{Sender}'\mbox{ }:=$};
\fill [black] (0.8,-29.2) -- (0,-28.7) -- (0,-29.7);
\end{tikzpicture}
}
    \end{flushright}
    \vspace*{-15pt}
    \caption{Two synthesized completions of the incomplete process of Fig.~\ref{fig:sndr_tmpl}. Observe that the two completions are identical except that states $s_3$ and $s_7$ are flipped.}
    \label{fig:sndr_complete}
\end{figure}

\begin{sloppypar}
To achieve this goal, we adopt the {\em guess-check-generalize} paradigm
(GCG)
    ~\cite{Sygus13,ScenariosHVC2014,GulwaniPolozovSingh2017,SolarLezama2006,SolarLezamaSTTT13}.
In a nutshell, GCG works as follows:
(1) pick a candidate completion $M$;
(2) check
whether $M$ satisfies $\psi$: if it does, $M$ is one possible solution to the synthesis problem;
(3) if $M$ violates $\psi$,
{\em prune} the search space of possible completions by excluding a {\em generalization} of $M$, and repeat from step (1).
In the most trivial case, the generalization of $M$ contains only $M$ itself. Ideally, however, and in order to achieve a more significant pruning of the search space, the generalization of $M$ should contain many more ``bad'' completions which are somehow ``similar'' (for instance, isomorphic) to $M$.
\end{sloppypar}

A naive way to generalize based on isomorphism is
to keep a list of completions encountered thus far and perform an isomorphism check against every element of this list whenever a new candidate is picked. Our approach is smarter: in fact, it does not involve any isomorphism checks whatsoever. Instead, our approach guarantees that no isomorphic completions are ever picked to begin with by pruning them from the search space. This is ultimately done using syntactic transformations of completion representations. The details are left for Section~\ref{sec:opt}.

Furthermore, our notion of ``encountering'' a completion is quite wide. Rather than just pruning completions that are isomorphic to {\em candidates}, we also prune completions that are isomorphic to any completion in the {\em generalizations of} the candidates (with respect to some prior, unextended notion of generalization).
Between the trivial approach
involving isomorphism checks
and our own approach are several other approaches which are good, but not excellent. Indeed, a
categorization
of the subtle differences between such approaches is a key contribution of this paper (see Section~\ref{sec:properties}). These subtleties are easy to miss.

In summary, the main contributions of this paper are the following:
(1) we define the 1-completion and completion-enumeration problems {\em modulo isomorphisms};
(2) we examine new methods to solve these problems based on the GCG paradigm;
(3) we identify properties that an efficient GCG modulo isomorphisms algorithm should have;
(4) we propose two instances of such an algorithm, using a naive and a sophisticated notion of generalization;
(5) we evaluate our methods on the synthesis of two simple distributed protocols: the ABP and Two Phase Commit (2PC) and demonstrate speedups with respect to the unoptimized method of approximately 2 to 10 times.

\section{Preliminaries}
\label{sec:background}

\vspace{1.1em}
\subsubsection{Labeled Transition Systems}

A (finite) {\em labeled transition system} (LTS) $M$ is a tuple
$\angles{\Sigma,Q,Q_0,\Delta}$, where
\begin{itemize}
    \item $\Sigma$ is a finite set of transition {\em labels} 
    \item $Q$ is a finite set of {\em states} 
    \item $Q_0\subseteq Q$ is the set of {\em initial states}
    \item $\Delta\subseteq Q\times\Sigma\times Q$ is the {\em transition relation}.
\end{itemize}
We write the transition $(p,a,q)\in\Delta$ as $\trans{p}{a}{q}$.

A {\em run} of $M$ is an infinite sequence $q_0\lt{a_0} q_1\lt{a_1} q_2\lt{a_2}...$,
where $q_0\in Q_0$ and for each $i$ we have $(q_i,a_i,q_{i+1}) \in \Delta$.
The {\em trace} produced by this run is $a_0 a_1 a_2 \cdots$.
Semantically, an LTS $M$ represents a set of infinite traces, denoted $\sem{M}\subseteq\Sigma^\omega$.
Specifically, a trace $a_0 a_1 a_2\cdots$ is in $\sem{M}$ exactly when there exists a run
$q_0\lt{a_0} q_1\lt{a_1} q_2\lt{a_2}...$ of $M$.

\subsubsection{Correctness Specification}
\label{sec_psi}

We will assume that we have some formal notion of {\em specification} and some formal notion of {\em satisfaction} between an LTS $M$ and a specification $\psi$. We write $M\sat\psi$ to denote that $M$ satisfies $\psi$.
Our work is agnostic to what exactly $\psi$ might be (e.g., a temporal logic formula, etc.).

\subsubsection{Completions and Syntactic Constraints}
\label{sec_Phi}

Suppose that $M$ and $M_0$ are two LTSs
with the same set of labels
$\Sigma$, the same set of states $Q$, the same set of initial states $Q_0$,
and with transition relations $\Delta$ and $\Delta_0$, respectively.
We say that
{\em $M$ is a completion of $M_0$}
exactly when $\Delta_0\subseteq \Delta$.
That is, $M$ completes $M_0$ by adding more transitions to it (and not removing any).
For example, each of the two LTSs of Fig.~\ref{fig:sndr_complete} is a completion of the LTS shown in Fig.~\ref{fig:sndr_tmpl}.

Often, we wish to impose some constraints on the kind of synthesized processes that we want to obtain during automated synthesis, other than the global constraints imposed on the system by the correctness specification.
For example, in the formal distributed protocol model proposed in~\cite{sigact2017}, synthesized processes such as the ABP $\textit{Sender}$ and $\textit{Receiver}$ are constrained to satisfy a number of requirements, including absence of deadlocks, determinism of the transition relation, the constraint that each state is either an {\em input state} (i.e., it only receives inputs) or an {\em output state} (i.e., it emits a unique output), the constraint that input states are {\em input-enabled} (i.e., they do not block any inputs), and so on.
Such properties are often syntactic or structural and can be inferred statically by observing the transition relation.
The fact that an LTS is a completion of another LTS can also be captured by such constraints.

Constraints like the above are application-specific, and our approach is agnostic to their precise form and meaning.
We will therefore abstract them away, and assume that there is
a propositional logic formula $\Phi$
which captures the set of all syntactically well-formed candidate completions.
The variable space of $\Phi$ and its precise meaning is application-specific.
We will give a detailed construction of $\Phi$ for LTS in Section~\ref{sec:gcg}.
We write $M\sat\Phi$ when LTS $M$ satisfies the {\em syntactic constraints} $\Phi$.
Let $\sem{\Phi} = \{ M \mid M\sat\Phi\}$.

We say that an LTS is {\em correct} if it satisfies both the syntactic constraints imposed by $\Phi$ and the semantic constraints imposed by $\psi$.

\subsubsection{Computational Problems}

\begin{problem}[Model-Checking]
Given LTS $M$, specification $\psi$, and constraints $\Phi$, check whether $M\sat\psi$ and $M\sat\Phi$.
\end{problem}
    A solution to the model-checking problem is an algorithm, $\MC$, such that for all $M,\Phi,\psi$,
    if $M\sat\Phi$ and $M\sat\psi$ then $\MC(M,\Phi,\psi) = 1$; otherwise, $\MC(M,\Phi,\psi) = 0$.

\begin{problem}[Synthesis]
Given specification $\psi$ and constraints $\Phi$, find, if one exists, LTS $M$ such that $M\sat\psi$ and $M\sat\Phi$.
\end{problem}

\begin{problem}[Completion]
\label{prob_completion}
Given LTS $M_0$, specification $\psi$, and constraints $\Phi$, find, if one exists, a completion $M$ of $M_0$ such that $M\sat\psi$ and $M\sat\Phi$.
\end{problem}

\begin{problem}[Completion enumeration]
\label{prob_completion_enumeration}
Given LTS $M_0$, specification $\psi$, and constraints $\Phi$, find all completions $M$ of $M_0$ such that $M\sat\psi$ and $M\sat\Phi$.
\end{problem}

\section{The Guess-Check-Generalize Paradigm}
\label{sec:gcg}

In this section we first propose a generic GCG algorithm and reason about its correctness (Section~\ref{sec_genericGCG}).
We then show how to instantiate this  algorithm to solve Problems~\ref{prob_completion} and~\ref{prob_completion_enumeration} (Section~\ref{sec_concreteGCG}).

\subsection{A Generic GCG Algorithm and its Correctness}
\label{sec_genericGCG}

Algorithm~\ref{fig:newalgo} is a formal description of a generic GCG algorithm.
The algorithm takes as input:
(1) a set of syntactic constraints in the form of a propositional formula $\Phi$, as described in Section~\ref{sec_Phi};
(2) a specification $\psi$ as described in Section~\ref{sec_psi};
and
(3) a {\em generalizer} function $\gamma$, described below.

\begin{algorithm}{}
    \While{$\Phi$ is satisfiable}{
        $\sigma := \SAT{\Phi}$\;\label{lineSAT}
        \uIf{$\MC(M_\sigma,\Phi,\psi)=1$}{\label{lineMC}
          \Return $\sigma$\;\label{lineRETURN}
        $\Phi := \Phi\wedge\neg \sigma$\;\label{lineEXACTLY}
        }
        \Else{
        $\Phi := \Phi\wedge\neg \gen(\sigma)$\;\label{lineGAMMA}
        }
    }
\caption{\algov{\Phi}{\psi}{\gamma}}
\label{fig:newalgo}
\end{algorithm}

$\Phi$ is a propositional logic formula (over a certain set of boolean variables that depends on the application domain at hand)
encoding all possible syntactically valid completions.
Every satisfying assignment $\sigma$ of $\Phi$ corresponds to one completion, which we denote as $M_\sigma$.
Observe that $\algo$
does not explicitly take an initial (incomplete) model $M_0$ as input: this omission is not a problem because $M_0$ can be encoded in $\Phi$, as mentioned in Section~\ref{sec_Phi}.
We explain specifically how to do that in the case of LTS in Section~\ref{sec_concreteGCG}.

The algorithm works as follows:
while $\Phi$ is satisfiable:
Line~\ref{lineSAT}: pick a candidate completion $\sigma$ allowed by $\Phi$ by calling a SAT solver.
Line~\ref{lineMC}: model-check the corresponding model $M_\sigma$ against $\psi$ (by definition, $M_\sigma$ satisfies $\Phi$ because $\sigma$ satisfies $\Phi$).
Line~\ref{lineRETURN}: if $M_\sigma$ satisfies $\psi$ then we have found a correct model: we can return it and terminate if we are solving Problem~\ref{prob_completion}, or return it and continue our search for additional correct models if we are solving Problem~\ref{prob_completion_enumeration}.
In the latter case, in line~\ref{lineEXACTLY} we exclude $\sigma$ from $\Phi$
(slightly abusing notation, we treat $\sigma$ as a formula satisfied exactly and only by $\sigma$, so that $\neg\sigma$ is the formula satisfied by all assignments except $\sigma$).
Line~\ref{lineGAMMA}:
if $M_\sigma$ violates $\psi$, then we exclude from $\Phi$ the {\em generalization} $\gen(\sigma)$ of $\sigma$, and continue our search.

\subsubsection{Generalizers}

A {\em generalizer} is a function $\gen$ which takes an assignment $\sigma$
and returns a propositional logic formula $\gen(\sigma)$ that encodes all ``bad'' assignments that we wish to exclude from $\Phi$.
Ideally, however, $\gen(\sigma)$ will encode many more assignments (and therefore candidate completions), so as to prune as large a part of the search space as possible.
A concrete implementation of $\gen$ may require additional information other than just $\sigma$. For example, $\gen$ may consult the specification $\psi$, counter-examples returned by the model-checker (which are themselves a function of $\psi$ and $\sigma$), and so on. We avoid including all this information in the inputs of $\gen$ to ease presentation. We note that $\psi$ does not change during a run of Algorithm~\ref{fig:newalgo} and therefore $\psi$ can be ``hardwired'' into $\gen$ without loss of generality.

A valid generalizer should include the assignment being generalized and it should only include bad assignments (i.e., it should exclude correct completions).
Formally,
a generalizer $\gen$ is said to be {\em proper} if for all $\sigma$ such that $\sigma\sat\Phi$ and $M_\sigma\unsat\psi$, the following conditions hold:
(1) {\em Self-inclusion}: $\sigma\sat\gen(\sigma)$, and (2) {\em Correct-exclusion}: for any $\rho$, if $\rho\sat\Phi$ and $M_{\rho}\sat\psi$ then $\rho\unsat\gen(\sigma)$.

\subsubsection{The Correctness of GCG}

\begin{lemma}
    \label{termination}
    If $\gen$ is proper then \algov{\Phi}{\psi}{\gen} terminates.
\end{lemma}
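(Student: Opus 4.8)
The plan is to argue that each iteration of the while-loop strictly shrinks the set of satisfying assignments of $\Phi$, and since there are only finitely many assignments (the variable space of $\Phi$ is finite, since $\Phi$ is a propositional formula over finitely many boolean variables), the loop can run only finitely many times. First I would observe that the loop body has exactly two branches: either the model-check succeeds, in which case line~\ref{lineEXACTLY} replaces $\Phi$ by $\Phi\wedge\neg\sigma$; or it fails, in which case line~\ref{lineGAMMA} replaces $\Phi$ by $\Phi\wedge\neg\gen(\sigma)$. In both cases the new $\Phi$ is a logical consequence of the old one, so $\sem{\Phi}$ can only shrink (weakly) across an iteration. It remains to show the shrinkage is strict, i.e., that at least one assignment is removed.

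The key step is this: in a given iteration, $\sigma := \SAT{\Phi}$ is chosen, so $\sigma\sat\Phi$ at the start of the iteration. In the success branch, $\neg\sigma$ is conjoined, and since $\sigma\unsat\neg\sigma$, we have $\sigma\notin\sem{\Phi}$ after the update, so $\sem{\Phi}$ strictly decreased. In the failure branch, we have $\sigma\sat\Phi$ and $M_\sigma\unsat\psi$, which is exactly the hypothesis under which properness applies; by the Self-inclusion condition, $\sigma\sat\gen(\sigma)$, hence $\sigma\unsat\neg\gen(\sigma)$, so again $\sigma\notin\sem{\Phi}$ after the update and $\sem{\Phi}$ strictly decreased. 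This is the one place the properness hypothesis is used, and it is what makes the lemma true — without Self-inclusion, a generalizer could return a formula not excluding $\sigma$ itself, and the algorithm could loop forever on the same $\sigma$.

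Putting it together: let $n$ be the (finite) number of boolean variables of $\Phi$, so there are at most $2^n$ assignments. Each iteration strictly reduces $\abs{\sem{\Phi}}$, which is a non-negative integer bounded above by $2^n$; hence there can be at most $2^n$ iterations before $\sem{\Phi}=\emptyset$, i.e., before $\Phi$ becomes unsatisfiable and the loop guard fails. Therefore \algov{\Phi}{\psi}{\gen} terminates. I do not anticipate a genuine obstacle here; the only subtlety worth stating carefully is that both loop branches — not just the failure branch — must be checked for strict decrease, and that the success branch's line~\ref{lineEXACTLY} is only executed in the enumeration mode, whereas in the 1-completion mode the algorithm returns immediately on line~\ref{lineRETURN}, which also trivially terminates. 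Either way the claim holds.
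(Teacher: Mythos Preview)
Your proposal is correct and follows essentially the same approach as the paper: the set of satisfying assignments of $\Phi$ is finite, and each iteration removes at least one assignment (namely $\sigma$ itself, using self-inclusion in the failure branch), so the loop terminates. Your write-up is simply more detailed in treating the two branches separately and noting the 1-completion versus enumeration distinction, but the core argument is identical.
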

\begin{proof}
If $\gen$ is proper then $\gamma(\sigma)$ is guaranteed to include at least $\sigma$.
$\Phi$ is a propositional logic formula, therefore it only has a finite set of satisfying assignments.
Every iteration of the loop removes at least one satisfying assignment from $\Phi$, therefore the algorithm terminates.
\qed
\end{proof}

During a run, Algorithm~\ref{fig:newalgo} returns a (possibly empty) set of assignments $\Sol=\{\sigma_1,\sigma_2,...,\sigma_n\}$, representing the solution to Problems~\ref{prob_completion} or~\ref{prob_completion_enumeration}.
Also during a run, the algorithm guesses candidate assignments by calling the subroutine $\SATnv$ (line~\ref{lineSAT}).
Let
$\Cand$ be the set of all these candidates.
Note that $\Sol\subseteq\Cand$, since every solution returned (line~\ref{lineRETURN}) has been first guessed in line~\ref{lineSAT}.

Whenever the algorithm reassigns $\Phi := \Phi\wedge\neg\phi$, we say that it {\em prunes} $\phi$, i.e., the satisfying assignments of $\phi$ are now excluded from the search. We will need to reason about the set of assignments that have been pruned after a certain {\em partial run} of the program. In such cases we can imagine running the algorithm for some amount of time and pausing it. Then the set $\Pruned$ denotes the set of assignments that have been pruned up until that point. It is true that after the program terminates $\Pruned = \sem{\Phi}\setminus\Cand$, but this equality does not necessarily hold for all partial runs.

\begin{theorem}
\label{sounds-and-complete}
(1)	\algov{\Phi}{\psi}{\gen} is {\em sound}, i.e., for all $\sigma\in\Sol$,
we have $\sigma\sat\Phi$ and $M_\sigma\sat \psi$.
(2) If $\gen$ is proper then
\algov{\Phi}{\psi}{\gen} is {\em complete}, i.e.,
 for all $\sigma\sat\Phi$, if $M_\sigma\sat \psi$ then $\sigma\in\Sol$.
\end{theorem}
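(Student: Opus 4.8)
The plan is to prove soundness and completeness separately, each by a straightforward invariant argument over the iterations of the loop in Algorithm~\ref{fig:newalgo}.

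For part~(1), soundness, the key observation is that an assignment $\sigma$ enters $\Sol$ only through line~\ref{lineRETURN}, and this line is guarded by the test $\MC(M_\sigma,\Phi,\psi)=1$ on line~\ref{lineMC}, which is itself reached only after $\sigma := \SAT{\Phi}$ on line~\ref{lineSAT}. So for any $\sigma \in \Sol$ we have both that $\sigma$ was returned by the SAT solver on some value of $\Phi$, hence $\sigma\sat\Phi$ at that point, and that $\MC(M_\sigma,\Phi,\psi)=1$. By the specification of $\MC$, the latter gives $M_\sigma\sat\Phi$ and $M_\sigma\sat\psi$. The only subtlety is that $\Phi$ changes over the run (it is only ever strengthened, by conjoining negated formulas on lines~\ref{lineEXACTLY} and~\ref{lineGAMMA}), so I must be careful to say that $\sigma\sat\Phi$ refers to the value of $\Phi$ at the iteration in which $\sigma$ was guessed; since $\Phi$ only gets stronger, and we are claiming $\sigma\sat\Phi$ for \emph{that} $\Phi$, this is immediate from the fact that $\SAT{\Phi}$ returns a satisfying assignment of its argument. (Indeed $\sigma$ also satisfies every earlier, weaker value of $\Phi$, in particular the initial one.)

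For part~(2), completeness, suppose $\gen$ is proper and fix any $\sigma$ with $\sigma\sat\Phi$ (initial $\Phi$) and $M_\sigma\sat\psi$; I must show $\sigma\in\Sol$. By Lemma~\ref{termination} the algorithm terminates, so consider the final state, where $\Phi$ is unsatisfiable. The plan is to track $\sigma$: at every iteration, either $\sigma$ is guessed by $\SATnv$ (i.e.\ $\sigma\in\Cand$), or it is not. I claim that throughout the run $\sigma$ continues to satisfy the current $\Phi$ \emph{until} the iteration in which it is guessed, if ever. Indeed, the only way $\sigma$ could stop satisfying $\Phi$ is via a pruning step: line~\ref{lineEXACTLY} conjoins $\neg\rho$ for some \emph{returned} $\rho$, and $\rho = \sigma$ would require $\sigma$ to have been guessed already; line~\ref{lineGAMMA} conjoins $\neg\gen(\rho)$ for some $\rho$ with $\rho\sat\Phi$ and $M_\rho\unsat\psi$, and by Correct-exclusion of proper $\gen$, since $M_\sigma\sat\psi$ and $\sigma\sat\Phi$ (the current $\Phi$, which $\sigma$ still satisfies by induction hypothesis), we get $\sigma\unsat\gen(\rho)$, so this conjunction does not remove $\sigma$ either. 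Hence $\sigma$ keeps satisfying $\Phi$ until it is guessed. Since the algorithm terminates with $\Phi$ unsatisfiable, $\sigma$ must have been guessed at some iteration — call its $\Phi$ value $\Phi'$, so $\sigma\sat\Phi'$ and $\sigma$ was assigned on line~\ref{lineSAT}. At that iteration line~\ref{lineMC} evaluates $\MC(M_\sigma,\Phi',\psi)$; since $M_\sigma\sat\Phi'$ (because $\sigma\sat\Phi'$) and $M_\sigma\sat\psi$, by the $\MC$ specification this equals $1$, so the algorithm proceeds to line~\ref{lineRETURN} and $\sigma\in\Sol$.

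The main obstacle is purely a matter of bookkeeping rather than genuine difficulty: one must phrase the completeness argument as an induction on loop iterations, with the induction hypothesis ``$\sigma$ satisfies the current $\Phi$, or $\sigma$ has already been guessed,'' and in the inductive step invoke Correct-exclusion with the \emph{current} $\Phi$ as the relevant formula — this is exactly where properness of $\gen$ is used, and it is worth stating explicitly that the hypotheses of Correct-exclusion ($\rho\sat\Phi$, $M_\rho\unsat\psi$, where here $\rho$ plays the role of the generalized assignment) are met precisely because line~\ref{lineGAMMA} is reached only in the $\MC$-returns-$0$ branch after $\rho := \SAT{\Phi}$. Everything else follows directly from the definitions of $\MC$, $\SATnv$, and proper generalizer, together with termination (Lemma~\ref{termination}) for the completeness direction.
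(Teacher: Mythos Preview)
Your proposal is correct and follows essentially the same route as the paper: soundness is read off from the guards on lines~\ref{lineSAT} and~\ref{lineMC}, and completeness uses Correct-exclusion to show that a correct $\sigma$ is never pruned by any $\gen(\rho)$, hence (by termination) must eventually be guessed and then pass the model-check. The paper phrases the completeness half as a short proof by contradiction (``if $\sigma\notin\Cand$ then $\sigma$ was pruned by some $\gen(\rho)$, contradicting Correct-exclusion'') rather than your explicit loop-invariant induction, and it does not spell out the line~\ref{lineEXACTLY} case or the evolving-$\Phi$ bookkeeping; but these are presentational differences, not a different argument.
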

\begin{proof}
Every $\sigma\in\Sol$ satisfies $\Phi$ (line~\ref{lineSAT}) and the corresponding $M_\sigma$ satisfies $\psi$ (line~\ref{lineMC}), therefore \algov{\Phi}{\psi}{\gen} is sound.
    Now, suppose that $\gamma$ is proper, and take $\rho$ such that $\rho\sat\Phi$ and $M_\rho\sat\psi$.
	To show completeness, it suffices to show that $\rho\in\Cand$. Then, we also have $\rho\in\Sol$ because $M_\rho$ passes the model-checking test in line~\ref{lineMC}.
	Suppose, for a contradiction, that $\rho\not\in\Cand$, i.e., that $\rho$ is pruned.
	Then there must exist some $\sigma$ such that $\rho\sat\gen(\sigma)$ (line~\ref{lineGAMMA}).
	But $\sigma\sat\Phi$ (line~\ref{lineSAT}), which means that $\rho$ violates the {\em correct-exclusion} property of $\gen$. Contradiction.
	\qed
\end{proof}

\subsection{A Concrete Instance of GCG for LTS}
\label{sec_concreteGCG}

Algorithm~\ref{fig:newalgo} is {\em generic} in the sense that depending on how exactly we instantiate $\Phi$, $\psi$, and $\gen$, we can encode different completion enumeration (and more generally model enumeration) problems, as well as solutions.
We now show how to instantiate Algorithm~\ref{fig:newalgo} to solve Problems~\ref{prob_completion} and~\ref{prob_completion_enumeration} concretely for LTS.

\subsubsection{Encoding LTSs and Completions in Propositional Logic}
Let $M_0 = \angles{\Sigma,Q,Q_0,\Delta_0}$ be an incomplete LTS.
Then we can define a set of boolean variables
$$V := \{\bvar{}{p}{a}{q} \mid p,q\in Q \wedge a\in\Sigma\}$$
so that boolean variable $\bvar{}{p}{a}{q}$ encodes whether transition $p\lt{a}q$ is present or not
(if $p\lt{a}q$ is present, then $\bvar{}{p}{a}{q}$ is true, otherwise it is false).
More formally, let $\asgn_V$ be the set of all assignments over $V$.
An assignment $\sigma\in\asgn_V$ represents LTS $M_\sigma$ with transition relation
$
\Delta_\sigma = \{ (p,a,q) \mid \sigma(\bvar{}{p}{a}{q}) = 1 \}.
$
To enforce $M_\sigma$ to be a completion of $M_0$, we need to enforce that $\Delta_0\subseteq\Delta_\sigma$.
We do so by initializing our syntactic constraints $\Phi$ as $\Phi:=\Phi_{\Delta_0}$, where
$$\Phi_{\Delta_0} := \bigwedge_{\trans{p}{a}{q}\in \Delta_0} \bvar{}{p}{a}{q} .$$
We can then add extra constraints to $\Phi$ such as determinism or absence of deadlocks, as appropriate.

\subsubsection{A Concrete Generalizer for LTS}

Based on the principles of \cite{sigact2017},
we can construct a {\em concrete generalizer} $\gen_\textit{LTS}(\sigma)$ for LTS as
$\gen_\textit{LTS}(\sigma)
:= \gen_\textit{safe}(\sigma) \vee \gen_\textit{live}(\sigma)$,
which we separate into a disjunction of a safety violation generalizer and a liveness violation generalizer.
The safety component $\gen_\textit{safe}$ works on the principle that if LTS $M_\sigma$ violates a safety property, then adding extra transitions will not solve this violation. Thus:
$$\gen_\textit{safe}(\sigma) := \bigwedge_{\{x\in V\mid\sigma(x)=1\}} x .$$

The liveness component $\gen_\textit{live}$ can be defined based on a notion of reachable, ``bad'' cycles that enable something to happen infinitely often. Thus, $\neg\gen_\textit{live}$  captures all LTSs that disable these bad cycles by breaking them or making them unreachable.

It can be shown that the concrete generalizer $\gen_\textit{LTS}$ is proper.
Therefore, the concrete instance \algov{\Phi}{\psi}{\gen_\textit{LTS}} is sound, terminating, and complete, i.e., it solves Problems~\ref{prob_completion} and~\ref{prob_completion_enumeration}.

Even though the concrete generalizer is correct, it is not very effective.
In particular, it does not immediately prune isomorphisms.
There may be $O(n!)$ trivially equivalent completions up to state reordering, where $n$ is the number of states in the LTS.
In the next section we present two optimizations exploiting isomorphisms.

\section{Synthesis Modulo Isomorphisms}
\label{sec:opt}

\subsection{LTS Isomorphisms}

Intuitively, two LTS are isomorphic if we can rearrange the states of one to obtain the other.
For synthesis purposes, we often wish to
provide as a constraint a set of {\em permutable states} $A$, so as to exclude rearrangements that move states outside of $A$. If we can still rearrange the states of an LTS $M_1$ to obtain another LTS $M_2$ subject to this constraint, then we say that {\em $M_1$ and $M_2$ are isomorphic up to $A$}.
For example, the two LTSs of Fig.~\ref{fig:sndr_complete} are isomorphic up to the set of permutable states $A = \{s_3,s_7\}$.
Strictly speaking, they are permutable up to any set of their states, but we choose $A$ to reflect the fact that those two states have no incoming or outgoing transitions in Fig.~\ref{fig:sndr_tmpl}.
Permuting any other states would yield an LTS that is not a completion of Fig.~\ref{fig:sndr_tmpl}.

We now define isomorphisms formally.
Let $M_0$, $M_1$, and $M_2$ be LTSs with the same $\Sigma,Q,Q_0$, and with transition relations $\Delta_0$, $\Delta_1$,
and $\Delta_2$, respectively.
Suppose that $M_1$ and $M_2$ are both completions of $M_0$.
Let $A\subseteq Q\setminus Q_0$. Then we say $M_1$ and $M_2$ are isomorphic up to $A$,
denoted $M_1\overset{A}{\simeq} M_2$,
if and only if there exists a bijection
$f : A\to A$ (i.e., a {\em permutation}) such that
\begin{equation*}
    \trans{p}{a}{q}\in\Delta_1 \text{ if and only if }
    \trans{f(p)}{a}{f(q)}\in\Delta_2.
\end{equation*}
By default, we will assume that $A$ is the set of non-initial states that have no incoming or outgoing transitions in $M_0$.
In that case we will omit $A$ and write $M_1\simeq M_2$.

\begin{lemma}
LTS isomorphism is an equivalence relation, i.e., it is reflexive, symmetric, and transitive.
\end{lemma}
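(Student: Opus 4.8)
The plan is to verify the three defining properties of an equivalence relation directly from the definition of $M_1\overset{A}{\simeq} M_2$, exhibiting the required permutation of $A$ in each case and checking that it satisfies the stated biconditional on transitions. The fixed set $A$ plays the role of a common carrier for all the permutations involved, so the argument really is the standard ``permutations of a set form a group'' fact, transported through the transition-relation condition.

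First, for \emph{reflexivity}: given a completion $M$ of $M_0$, take $f = \mathrm{id}_A$, the identity permutation on $A$. Then $\trans{p}{a}{q}\in\Delta$ iff $\trans{\mathrm{id}(p)}{a}{\mathrm{id}(q)}\in\Delta$ trivially, so $M\overset{A}{\simeq} M$. Second, for \emph{symmetry}: suppose $M_1\overset{A}{\simeq} M_2$ via the bijection $f : A\to A$. Since $f$ is a bijection, $f^{-1}$ is also a bijection $A\to A$. The defining biconditional ``$\trans{p}{a}{q}\in\Delta_1$ iff $\trans{f(p)}{a}{f(q)}\in\Delta_2$'' can be re-indexed by substituting $p\mapsto f^{-1}(p)$, $q\mapsto f^{-1}(q)$ (legitimate because $f^{-1}$ is onto $A$), yielding ``$\trans{f^{-1}(p)}{a}{f^{-1}(q)}\in\Delta_1$ iff $\trans{p}{a}{q}\in\Delta_2$'', which is exactly the condition witnessing $M_2\overset{A}{\simeq} M_1$ via $f^{-1}$. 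Third, for \emph{transitivity}: suppose $M_1\overset{A}{\simeq} M_2$ via $f$ and $M_2\overset{A}{\simeq} M_3$ via $g$, both bijections $A\to A$. Then $g\circ f : A\to A$ is a bijection, and chaining the two biconditionals gives $\trans{p}{a}{q}\in\Delta_1$ iff $\trans{f(p)}{a}{f(q)}\in\Delta_2$ iff $\trans{g(f(p))}{a}{g(f(q))}\in\Delta_3$, i.e.\ $M_1\overset{A}{\simeq} M_3$ via $g\circ f$.

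One subtlety worth a sentence in the write-up: the definition of $M_1\overset{A}{\simeq} M_2$ presupposes that $M_1$ and $M_2$ are completions of a common $M_0$ and that $A\subseteq Q\setminus Q_0$. These side conditions are symmetric in $M_1,M_2$ and do not involve the witnessing permutation, so they are trivially preserved under all three operations above; reflexivity holds vacuously whenever $M$ is a completion of $M_0$ at all. Since the excerpt fixes $A$ (and hence $M_0$, $Q$, $Q_0$) up front, there is nothing more to check here.

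There is no real obstacle: the entire content is the observation that the identity is a permutation, inverses of bijections are bijections, and compositions of bijections are bijections, together with the routine re-indexing of the transition biconditional in the symmetric and transitive cases. If anything is mildly delicate, it is making sure the substitution step in the symmetry argument is stated cleanly — that it is valid precisely because $f^{-1}$ maps \emph{onto} $A$, so every pair $(p,q)\in A\times A$ arises as $(f^{-1}(p'),f^{-1}(q'))$ for some $(p',q')$ — but this is a one-line remark rather than a genuine difficulty.
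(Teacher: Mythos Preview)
Your proof is correct and is exactly the standard argument one would expect. The paper itself states this lemma without proof, so there is no approach to compare against; your explicit verification via the identity permutation, the inverse permutation, and the composition of permutations is the natural way to fill the gap.

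One small remark worth tightening: in the paper's definition the biconditional on transitions implicitly ranges over all $p,q\in Q$, while $f$ is given only as a bijection $A\to A$. The paper later (in Section~\ref{subsec_opt}) makes explicit the convention that $f$ is extended by the identity on $Q\setminus A$. With that convention your symmetry argument goes through because the extended $f^{-1}$ is a bijection on all of $Q$; your parenthetical ``legitimate because $f^{-1}$ is onto $A$'' should really say onto $Q$, but this is a cosmetic fix rather than a gap.
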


We use $\eqcl{M}$ to denote the {\em equivalence class} of $M$, i.e., $\eqcl{M}=\{M' \mid M'\simeq M\}$.

\begin{lemma}
\label{lemma_isomorphism_preserves_traces}
    If $M_1\overset{A}{\simeq} M_2$ then $\sem{M_1}=\sem{M_2}$.
\end{lemma}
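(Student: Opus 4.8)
The plan is to show that the permutation $f:A\to A$ witnessing $M_1\overset{A}{\simeq}M_2$ induces a bijection between runs of $M_1$ and runs of $M_2$ that preserves the trace produced, from which $\sem{M_1}=\sem{M_2}$ follows by the definition of $\sem{\cdot}$ given in the Preliminaries. First I would extend $f$ to a permutation $\hat f:Q\to Q$ of the full state set by letting $\hat f$ act as $f$ on $A$ and as the identity on $Q\setminus A$; this is a bijection since $f$ is. I then need to check that the ``if and only if'' characterizing $M_1\overset{A}{\simeq}M_2$, which a priori only constrains transitions among states of $A$, actually upgrades to: $\trans{p}{a}{q}\in\Delta_1$ iff $\trans{\hat f(p)}{a}{\hat f(q)}\in\Delta_2$ for \emph{all} $p,q\in Q$.

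This upgrade is the one genuinely non-trivial point, so I would handle it carefully. The key fact is that $M_1$ and $M_2$ are both completions of the \emph{same} $M_0$, and that $A$ consists of states with no incoming or outgoing transitions in $M_0$ (this is exactly why the definition restricts $A\subseteq Q\setminus Q_0$ and, by the stated default, to isolated states of $M_0$). Hence any transition of $M_1$ or $M_2$ touching a state of $A$ is a transition added in the completion, never one forced by $\Delta_0$. For $p,q$ both in $A$ the claim is the hypothesis verbatim. For $p\in A$, $q\notin A$ (and symmetrically), one argues that such transitions are in bijection via $\hat f$ by a counting/extensionality argument: since the isomorphism $f$ restricted to transitions internal to $A$ is already a bijection $\Delta_1\cap(A\times\Sigma\times A)\to\Delta_2\cap(A\times\Sigma\times A)$, and transitions with exactly one endpoint in $A$ must be accounted for consistently — I would in fact note that the cleanest route is to observe the definition's biconditional, read with $f$ a permutation of $A$, already pins down precisely which transitions incident to $A$ exist on each side, because a transition $\trans{p}{a}{q}$ with $p\in A$ is equivalent to $\trans{f(p)}{a}{f(q)}$ and $f(q)=q$ when $q\notin A$. (Strictly, if the paper intends the biconditional to range over all $p,q\in Q$ with $f$ extended by identity, this step is immediate; I would make that reading explicit.) For $p,q$ both outside $A$, both sides agree because $\hat f$ is the identity there and, as $M_1,M_2$ are completions of $M_0$ and $A$-isomorphic, their transitions disjoint from $A$ coincide — again because neither completion could have been forced to differ off $A$.

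With the full-state biconditional in hand, the rest is routine. Given a run $q_0\lt{a_0}q_1\lt{a_1}q_2\cdots$ of $M_1$, apply $\hat f$ pointwise to get $\hat f(q_0)\lt{a_0}\hat f(q_1)\lt{a_1}\cdots$; each step is a transition of $M_2$ by the biconditional, and $\hat f(q_0)\in Q_0$ since $q_0\in Q_0\subseteq Q\setminus A$ so $\hat f(q_0)=q_0$. Thus it is a run of $M_2$ producing the \emph{same} trace $a_0a_1a_2\cdots$, giving $\sem{M_1}\subseteq\sem{M_2}$. The reverse inclusion follows by the same argument applied to $\hat f^{-1}$ (or by symmetry of $\overset{A}{\simeq}$, already noted). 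Hence $\sem{M_1}=\sem{M_2}$. The main obstacle, as indicated, is purely the bookkeeping around transitions incident to $A$ versus the formal statement of the isomorphism condition; everything downstream is a direct unwinding of definitions. \qed
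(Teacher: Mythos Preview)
The paper states this lemma without proof, so there is nothing to compare against; your overall strategy --- extend $f$ to $\hat f:Q\to Q$ by the identity off $A$, push runs of $M_1$ through $\hat f$ to get runs of $M_2$ with the same label sequence, and invoke symmetry for the reverse inclusion --- is the standard and correct one.

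The one genuine gap is in your attempt to \emph{derive} the full-state biconditional from the literal reading of the definition (where $f:A\to A$ and the biconditional is only quantified over $p,q\in A$). Your claim that ``transitions disjoint from $A$ coincide --- again because neither completion could have been forced to differ off $A$'' is simply false: a completion may add \emph{any} transitions, so two completions of $M_0$ can disagree on a transition $\trans{p}{a}{q}$ with $p,q\notin A$, and nothing in the $A$-restricted biconditional prevents this. The ``counting/extensionality'' remark for the mixed case is similarly unsupported. Under the literal reading, the lemma is in fact false.

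You already identify the fix: the paper intends the biconditional to range over all $p,q\in Q$ with $f$ extended by the identity outside $A$ (this is exactly what the paper does explicitly later, when defining the concrete permuter $\perm_\textit{LTS}$). Commit to that reading from the outset, drop the attempted derivation, and the rest of your argument goes through verbatim.
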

Lemma~\ref{lemma_isomorphism_preserves_traces} states that LTS isomorphism preserves traces.
More generally, we will assume that our notion of specification is preserved by LTS isomorphism, namely,
that if $M_1\overset{A}{\simeq} M_2$ then for any specification $\psi$, $M_1\sat\psi$ iff $M_2\sat\psi$.

\subsubsection{Isomorphic Assignments}

Two assignments $\sigma$ and $\rho$ are isomorphic if the LTSs that they represent are isomorphic. Hence we write $\sigma\simeq\rho$ if and only if $M_\sigma\simeq M_\rho$.
We write $\eqcl{\rho}$ to denote the equivalence class of $\rho$, i.e., the set of all assignments that are isomorphic to $\rho$. These equivalence classes partition $\Phi$ since $\simeq$ is an equivalence relation.

\subsection{Completion Enumeration Modulo Isomorphisms}

Isomorphisms allow us to focus our attention to Problem~\ref{prob_completion_enumeration_iso} instead of Problem~\ref{prob_completion_enumeration}:

\begin{problem}[Completion enumeration modulo isomorphisms]
\label{prob_completion_enumeration_iso}
Given LTS $M_0$, specification $\psi$, and constraints $\Phi$, find  the set
 $$\{[M]\mid M \mbox{ is a completion of } M_0 \mbox{ such that }M\sat\psi \mbox{ and } M\sat\Phi\}.$$
\end{problem}

Problem~\ref{prob_completion_enumeration_iso} asks that only significantly different (i.e., non-isomorphic) completions are returned to the user.
Problem~\ref{prob_completion_enumeration_iso} can be solved by a simple modification to Algorithm~\ref{fig:newalgo}, namely, to exclude the entire equivalence class $\eqcl{\sigma}$ of any discovered solution $\sigma$, as shown in Algorithm~\ref{fig:newalgo2}, line~\ref{lineEXACTLY-pr}.

\begin{algorithm}
    \While{$\Phi$ is satisfiable}{
        $\sigma := \SAT{\Phi}$\;\label{lineSAT-pr}
        \uIf{$\MC(M_\sigma,\Phi,\psi)=1$}{\label{lineMC-pr}
          \Return $\sigma$\;\label{lineRETURN-pr}
        $\Phi := \Phi\wedge\neg\eqcl{\sigma}$\;\label{lineEXACTLY-pr}
        }
        \Else{
        $\Phi := \Phi\wedge\neg \gen(\sigma)$\;\label{lineGAMMA-pr}
        }
    }
\caption{\algovpr{\Phi}{\psi}{\gamma} solving Problem~\ref{prob_completion_enumeration_iso}}
\label{fig:newalgo2}
\end{algorithm}

\subsection{Properties of an Efficient GCG Algorithm}
\label{sec:properties}
We begin by presenting a list of properties that an efficient instance of GCG ought to satisfy.
Except for Property~\ref{inv:sol}, satisfaction of these properties generally depends on the generalizer used.

\begin{property}
\label{inv:sol}
    For all $\sigma$ that satisfy $\Phi$,
    $\eqcl{\sigma}\cap\Sol$ has 0 or 1 element(s).
    In other words, we return at most one solution per equivalence class.
\end{property}

Property~\ref{inv:sol} asks that only significantly different (i.e., non-isomorphic) completions are returned to the user, thereby solving Problem~\ref{prob_completion_enumeration_iso}, which is our main goal.
In addition, this property implies that the number of completions is kept small, which is important when these are fed as inputs to some other routine (e.g., one that selects a ``highly fit'' completion among all valid completions).

\algopr{}  satisfies Property~\ref{inv:sol}, regardless of the parameters.
However, we can go further, by ensuring that not only we do not return isomorphic completions, but we do not even consider isomorphic candidate completions in the first place:

\begin{property}
\label{inv:cand}
    For all $\sigma$ that satisfy $\Phi$,
    $\eqcl{\sigma}\cap\Cand$ has 0 or 1 element(s).
    In other words, we consider at most one candidate per equivalence class.
\end{property}

Maintaining Property~\ref{inv:cand} now guarantees that we only
call the most expensive subroutines at most once for each equivalence class.
Note that, since $\Sol\subseteq\Cand$, Property~\ref{inv:cand} implies Property~\ref{inv:sol}.

 Property~\ref{inv:cand} is still not entirely satisfactory. For instance,
suppose the algorithm generates $\sigma$ as a candidate and then prunes $\gen(\sigma)$. Now suppose that $\rho\simeq\sigma$. Property~\ref{inv:cand} implies that we {\em cannot} call/prune $\gen(\rho)$.
 Property~\ref{inv:gen} rectifies this:

\begin{property}[invariant]
\label{inv:gen}
    Suppose that \algopr{}
     invokes $\Phi := \Phi \wedge \neg \gen(\sigma)$.
    Then for any $\rho\simeq\sigma$, we should have $\sem{\gen(\rho)}\subseteq\Pruned$.
    In other words,
if we prune $\gen(\sigma)$, we should also prune $\gen(\rho)$ for every $\rho$ isomorphic to $\sigma$.
\end{property}

We note that, contrary to Properties~\ref{inv:sol} and~\ref{inv:cand} which need only hold after termination,
 Property~\ref{inv:gen} is an {\em invariant}: we want it to hold for all {\em partial executions} of the  algorithm.

\begin{theorem}
    Suppose $\gen$ is proper. If \algovpr{\Phi}{\psi}{\gen} maintains Property \ref{inv:gen} as an invariant, then \algovpr{\Phi}{\psi}{\gen} also maintains Property~\ref{inv:cand}.
\end{theorem}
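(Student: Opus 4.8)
The goal is to show that Property~\ref{inv:cand} holds at termination, assuming $\gen$ is proper and Property~\ref{inv:gen} holds throughout. The natural approach is a proof by contradiction combined with an analysis of the first moment at which Property~\ref{inv:cand} could be violated. So suppose that at termination there are two distinct assignments $\sigma,\rho$ with $\sigma\simeq\rho$ and both $\sigma,\rho\in\Cand$. Without loss of generality $\sigma$ is guessed (in line~\ref{lineSAT-pr}) strictly before $\rho$. I would then track what happens to $\sigma$ after it is guessed.

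First I would split on the model-checking outcome for $M_\sigma$. If $M_\sigma\sat\psi$, then line~\ref{lineEXACTLY-pr} executes $\Phi := \Phi\wedge\neg\eqcl{\sigma}$, which prunes the entire equivalence class $\eqcl{\sigma}$; since $\rho\simeq\sigma$ we have $\rho\in\eqcl{\sigma}$, so $\rho$ is pruned at that point and can never again satisfy $\Phi$, hence can never be returned by a later $\SATnv$ call — contradicting $\rho\in\Cand$ being guessed later. If instead $M_\sigma\unsat\psi$, then line~\ref{lineGAMMA-pr} executes $\Phi := \Phi\wedge\neg\gen(\sigma)$. By Property~\ref{inv:gen}, applied at this partial execution, for every $\rho'\simeq\sigma$ we have $\sem{\gen(\rho')}\subseteq\Pruned$; in particular, taking $\rho'=\rho$, we get $\sem{\gen(\rho)}\subseteq\Pruned$. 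Since $\gen$ is proper, self-inclusion gives $\rho\sat\gen(\rho)$, i.e. $\rho\in\sem{\gen(\rho)}\subseteq\Pruned$, so $\rho$ is pruned — and again $\Pruned$ is monotone (pruning only adds conjuncts to $\Phi$), so $\rho$ remains pruned forever and cannot be guessed later, contradicting $\rho\in\Cand$.

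The two cases together cover every possibility, giving the contradiction and hence Property~\ref{inv:cand}. I expect the main subtlety to be the bookkeeping around \emph{when} Property~\ref{inv:gen} is invoked: it must be applied at exactly the partial execution right after the pruning of $\gen(\sigma)$, and one must be careful that ``$\rho$ is pruned'' is a stable property — once an assignment is pruned it stays pruned, because the algorithm only ever strengthens $\Phi$ by conjoining negations. A second minor point worth stating explicitly is that a candidate is, by definition, produced by a $\SATnv$ call on the \emph{current} $\Phi$, so a pruned assignment is never a candidate of any subsequent step; this is what lets us derive the contradiction from ``$\rho$ is guessed after $\sigma$.'' Neither of these is hard, but both are the kind of detail ``easy to miss'' that the paper itself warns about, so I would state them as a short preliminary observation before the case analysis.
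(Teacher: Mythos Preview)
The paper states this theorem without proof, so there is no argument to compare against. Your approach is correct and is the natural one: assume two isomorphic candidates $\sigma,\rho$ exist, order them temporally, and show that the earlier one prunes the later one, either via line~\ref{lineEXACTLY-pr} (if $M_\sigma\sat\psi$) or via Property~\ref{inv:gen} combined with self-inclusion (if $M_\sigma\unsat\psi$).

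One small point you should make explicit in the second case: self-inclusion is only guaranteed for assignments $\rho$ with $\rho\sat\Phi$ \emph{and} $M_\rho\unsat\psi$. The former holds because $\rho\in\Cand$, but you invoke self-inclusion for $\rho$ without saying why $M_\rho\unsat\psi$. This follows from $M_\sigma\unsat\psi$ together with the paper's standing assumption that isomorphism preserves satisfaction of $\psi$ (stated right after Lemma~\ref{lemma_isomorphism_preserves_traces}), but it is worth one sentence since the proper-generalizer definition is conditional on both hypotheses. With that addition your argument is complete.
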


Maintaining Property~\ref{inv:gen} increases the rate at which the search space is pruned,
but is still not enough.
Suppose that $\tau\sat\gen(\sigma)$ and that $\tau'\simeq\tau$. If we prune the members of
$\gen(\sigma)$, then we will prune $\tau$, but not necessarily $\tau'$.
This possibility is unsatisfactory, since $\tau$ and $\tau'$ should both be treated whenever one of them is.

\begin{property}[invariant]
\label{inv:pru}
    Suppose $\pru{\tau}$ and $\tau'\simeq\tau$.
    Then $\pru{\tau'}$ or $\sol{\tau'}$.
    In other words, if we prune $\tau$ we should also prune
	any isomorphic $\tau'$, unless $\tau'$ happens to be a solution.
(Note that Property~\ref{inv:sol} guarantees that this exception applies to at most one $\tau'$).
\end{property}

Maintaining Property~\ref{inv:pru} as an invariant further accelerates pruning.
Under certain conditions, Property~\ref{inv:gen} implies Property~\ref{inv:pru}.
In particular, Property~\ref{inv:gen} implies Property~\ref{inv:pru} if $\gen$ is {\em invertible}, a concept that we define next.

\subsubsection{Invertible Generalizers}

Let $\gen$ be a generalizer and let $\tau$ be an assignment.
We define the {\em inverse} $\gen^{-1}(\tau)$, to be the propositional logic formula satisfied by all $\sigma$ such that $\tau\sat\gen(\sigma)$. That is, $\sigma\sat\gen^{-1}(\tau)$ iff $\tau\sat\gen(\sigma)$.

Let $\phi$ and $\phi'$ be propositional logic formulas.
Suppose that for every $\sigma\sat\phi$, there exists a $\sigma'\sat\phi'$ such that $\sigma'\simeq\sigma$.
 Then we say that {\em$\phi$ subsumes $\phi'$ up to isomorphism}. If $\phi$ and $\phi'$ both subsume each other, then we say that they are {\em equivalent up to isomorphism}.

A generalizer $\gamma$ is {\em invertible} if for all assignments $\tau,\tau'$ that satisfy $\Phi$,
if $\tau\simeq\tau'$ then
$\gen^{-1}(\tau)$ and $\gen^{-1}(\tau')$ are equivalent up to isomorphism.
Now if $\tau\sat\gen(\sigma)$ and $\tau'\simeq\tau$, invertibility guarantees that we can point to a $\sigma'\simeq\sigma$ such that $\tau'\sat\gen(\sigma')$.

\begin{theorem}
    \label{invertible}
    Suppose $\gen$ is proper and invertible. If \algovpr{\Phi}{\psi}{\gen} maintains Property \ref{inv:gen} as an invariant, then \algovpr{\Phi}{\psi}{\gen} also maintains Property~\ref{inv:pru} as an invariant.
\end{theorem}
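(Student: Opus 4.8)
The goal is to show that, under the assumption that $\gen$ is proper and invertible, maintaining Property~\ref{inv:gen} as an invariant forces Property~\ref{inv:pru} to hold as an invariant. So I would reason about an arbitrary partial execution: fix a point in the run, and suppose $\pru{\tau}$ and $\tau'\simeq\tau$, where both $\tau,\tau'$ satisfy $\Phi$ (assignments not satisfying $\Phi$ are vacuously irrelevant, since $\Pruned\subseteq\sem{\Phi}$). I want to conclude $\pru{\tau'}$ or $\sol{\tau'}$. The proof splits on how $\tau$ came to be pruned: either $\tau$ was pruned as (part of the equivalence class of) a solution via line~\ref{lineEXACTLY-pr}, or $\tau$ was pruned because $\tau\sat\gen(\sigma)$ for some candidate $\sigma$ via line~\ref{lineGAMMA-pr}.

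\textbf{Case 1: $\tau$ pruned via the solution branch.} If $\tau$ was removed by the step $\Phi := \Phi\wedge\neg\eqcl{\sigma}$ for some returned solution $\sigma$, then $\tau\in\eqcl{\sigma}$, i.e. $\tau\simeq\sigma$. Since $\tau'\simeq\tau$ and $\simeq$ is transitive, $\tau'\simeq\sigma$, so $\tau'\in\eqcl{\sigma}$ as well, and therefore $\tau'$ was pruned by that very same step: $\pru{\tau'}$. (Here I am using that line~\ref{lineEXACTLY-pr} prunes the whole class at once.) Actually I should be slightly careful: if $\tau'$ was itself the returned solution $\sigma$, then $\sol{\tau'}$ — which is still fine, it is the allowed exception. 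Either way the conclusion holds.

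\textbf{Case 2: $\tau$ pruned via the generalizer branch.} Here $\tau$ was removed by some invocation $\Phi := \Phi\wedge\neg\gen(\sigma)$ with $\sigma$ a candidate, so $\tau\sat\gen(\sigma)$, i.e. $\sigma\sat\gen^{-1}(\tau)$. By invertibility of $\gen$ (applicable since $\tau,\tau'\sat\Phi$ and $\tau\simeq\tau'$), $\gen^{-1}(\tau)$ and $\gen^{-1}(\tau')$ are equivalent up to isomorphism; in particular $\gen^{-1}(\tau)$ subsumes $\gen^{-1}(\tau')$ up to isomorphism — wait, I need the direction that gives me a witness from $\sigma$. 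I want: there exists $\sigma'\simeq\sigma$ with $\sigma'\sat\gen^{-1}(\tau')$, i.e. $\tau'\sat\gen(\sigma')$. That is exactly "$\gen^{-1}(\tau)$ subsumes $\gen^{-1}(\tau')$ up to isomorphism" applied to $\sigma\sat\gen^{-1}(\tau)$; equivalence-up-to-isomorphism gives subsumption in both directions, so this holds. Now invoke Property~\ref{inv:gen}: the algorithm invoked $\Phi := \Phi\wedge\neg\gen(\sigma)$, and $\sigma'\simeq\sigma$, so Property~\ref{inv:gen} guarantees $\sem{\gen(\sigma')}\subseteq\Pruned$ at that point (and $\Pruned$ only grows, so it still holds now). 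Since $\tau'\sat\gen(\sigma')$, we get $\tau'\in\sem{\gen(\sigma')}\subseteq\Pruned$, i.e. $\pru{\tau'}$.

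\textbf{Main obstacle.} The delicate point is getting the quantifier directions exactly right when combining invertibility with Property~\ref{inv:gen}: invertibility is phrased as a statement about $\gen^{-1}(\tau)$ and $\gen^{-1}(\tau')$ being equivalent up to isomorphism, and I must unfold "equivalent up to isomorphism" into the concrete existential ("for every $\sigma\sat\gen^{-1}(\tau)$ there is $\sigma'\simeq\sigma$ with $\sigma'\sat\gen^{-1}(\tau')$") and then translate back through the definition $\sigma\sat\gen^{-1}(\tau)\iff\tau\sat\gen(\sigma)$ to land on a $\sigma'$ that I can feed to Property~\ref{inv:gen}. I should also be mildly careful that invertibility is only stated for assignments satisfying $\Phi$, so I should note at the outset that if $\tau'\unsat\Phi$ then $\tau'$ cannot be a counterexample since $\Pruned\cup\Sol\subseteq\sem{\Phi}$ makes the claim vacuous; and that $\tau\sat\Phi$ because $\pru{\tau}$ implies $\tau\in\Pruned\subseteq\sem{\Phi}$. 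The remaining steps (transitivity of $\simeq$, monotonicity of $\Pruned$ across the run) are routine and I would not belabor them.
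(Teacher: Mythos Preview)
Your proposal is correct and follows essentially the same approach as the paper's proof: a case split on whether $\tau$ was pruned via the solution branch (line~\ref{lineEXACTLY-pr}) or via the generalizer branch (line~\ref{lineGAMMA-pr}), with the latter case using invertibility to produce $\sigma'\simeq\sigma$ with $\tau'\sat\gen(\sigma')$ and then applying Property~\ref{inv:gen}. The paper phrases this as a proof by contradiction rather than a direct argument, but the logical content is identical; your treatment of the edge cases (e.g., $\tau'$ being the returned solution, and the $\Phi$-satisfaction preconditions for invertibility) is in fact slightly more careful than the paper's.
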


\begin{proof}

    Let $\gen$ be a proper, invertible generalizer.
    We will proceed by contradiction.
    Assume that we have run the algorithm for some amount of time and paused its execution, freezing the state of $\Pruned$.
    Suppose that \algovpr{\Phi}{\psi}{\gen} satisfies Property~\ref{inv:gen} at this point, but that it does not satisfy Property~\ref{inv:pru}.
    From the negation of Property~\ref{inv:pru}, we have at this point in the execution two assignments $\tau$ and $\tau'$ such that
    (1) $\tau\simeq\tau'$,
    (2) $\pru{\tau}$,
    (3) $\notpru{\tau'}$, and
    (4) $\notsol{\tau'}$

    There are two cases that fall out of (2). Either $\tau$ was pruned using a call to $\gen$, or exactly $\eqcl{\tau}$ was pruned. In the second case, we quickly reach a contradiction since it implies that $\pru{\tau'}$, violating assumption (3).

    So instead, we assume $\tau\sat\gen(\sigma)$
    for some $\sigma$
    and that
	this call to $\gen$ was invoked at some point in the past. So $\sigma\sat\gen^{-1}(\tau)$.
    But then by invertibility and (1) there exists $\sigma'\simeq\sigma$
    such that $\sigma'\sat\gen^{-1}(\tau')$ and hence $\tau'\sat\gen(\sigma')$.
    Property~\ref{inv:gen} tells us then that $\pru{\tau'}$, but this conclusion also violates assumption (3).
	\qed
\end{proof}

It can be shown that the generalizer $\gen_\textit{LTS}$ is invertible.
Essentially, this is because $\gen_\textit{LTS}$ does not depend on state names
(for example, the structure of cycles and paths is independent of state names).
Still, \algovpr{\Phi}{\psi}{\gen_\textit{LTS}} satisfies only Property~\ref{inv:sol} above.
Therefore, we will next describe an optimized generalization method that exploits isomorphism to satisfy all properties.

\subsection{Optimized Generalization}
\label{subsec_opt}

\vspace{1.1em}
\subsubsection{Equivalence Closure}

If $\gen$ is a generalizer and $\simeq$ is an equivalence relation, then let
$$\overset{\simeq}{\gen}(\rho) := \bigvee_{\sigma\in\eqcl{\rho}}\gen(\sigma)$$
be the {\em equivalence closure} of $\gen$.
If $\gen(\sigma) \equiv\overset{\simeq}{\gen}(\sigma)$ for all $\sigma$, we say that $\gen$ is {\em closed under equivalence}.

Note that $\overset{\simeq}{\gen}$ is itself a generalizer. An instance of \algopr{} that uses $\overset{\simeq}{\gen}$ is correct and satisfies all the efficiency properties identified above:

\begin{theorem}
\label{thm_correctness}
    If $\gen$ is a proper generalizer,
    then \algovpr{\Phi}{\psi}{\overset{\simeq}{\gen}} is sound, terminating, and complete up to isomorphisms.
\end{theorem}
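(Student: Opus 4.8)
The plan is to prove soundness, termination, and completeness up to isomorphisms for $\algovpr{\Phi}{\psi}{\overset{\simeq}{\gen}}$ by first verifying that $\overset{\simeq}{\gen}$ is itself a proper generalizer and then invoking the already-established results about \algopr{}. The crux is the propriety claim, since soundness and termination follow almost immediately once we have it: soundness of \algopr{} holds regardless of the generalizer (by the same argument as Theorem~\ref{sounds-and-complete}(1), which only uses lines~\ref{lineSAT-pr} and~\ref{lineMC-pr}), and termination follows from Lemma~\ref{termination} applied to \algopr{} — each iteration either returns and prunes a nonempty equivalence class $\eqcl{\sigma}$ or prunes $\overset{\simeq}{\gen}(\sigma)$ which, by self-inclusion, contains at least $\sigma$; either way at least one satisfying assignment of $\Phi$ is removed, so the loop runs finitely often.

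First I would show $\overset{\simeq}{\gen}$ is proper, assuming $\gen$ is. Fix $\sigma$ with $\sigma\sat\Phi$ and $M_\sigma\unsat\psi$. For self-inclusion, note $\sigma\in\eqcl{\sigma}$, so $\gen(\sigma)$ is one of the disjuncts of $\overset{\simeq}{\gen}(\sigma)$, and since $\gen$ is proper, $\sigma\sat\gen(\sigma)$, hence $\sigma\sat\overset{\simeq}{\gen}(\sigma)$. For correct-exclusion, take any $\rho$ with $\rho\sat\Phi$ and $M_\rho\sat\psi$; we must show $\rho\unsat\overset{\simeq}{\gen}(\sigma)$, i.e. $\rho\unsat\gen(\sigma')$ for every $\sigma'\in\eqcl{\sigma}$. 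Here is where I need a small technical point: each such $\sigma'$ satisfies $\sigma'\sat\Phi$ (because $\simeq$ partitions $\sem{\Phi}$ — stated in the "Isomorphic Assignments" paragraph — so $\sigma\sat\Phi$ implies every member of its class also satisfies $\Phi$) and $M_{\sigma'}\unsat\psi$ (because $M_{\sigma'}\simeq M_\sigma$ and the specification is isomorphism-invariant, as assumed just after Lemma~\ref{lemma_isomorphism_preserves_traces}). So the correct-exclusion hypothesis of the \emph{original} $\gen$ applies to each $\sigma'$: since $\rho\sat\Phi$ and $M_\rho\sat\psi$, we get $\rho\unsat\gen(\sigma')$. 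Taking the disjunction over $\sigma'\in\eqcl{\sigma}$ gives $\rho\unsat\overset{\simeq}{\gen}(\sigma)$, establishing correct-exclusion.

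With $\overset{\simeq}{\gen}$ proper, Lemma~\ref{termination} (applied to \algopr{}, whose termination argument is identical to Algorithm~\ref{fig:newalgo}'s) gives termination, and the soundness/completeness argument of Theorem~\ref{sounds-and-complete} adapts: soundness is unchanged; for completeness up to isomorphisms, take $\rho$ with $\rho\sat\Phi$ and $M_\rho\sat\psi$ and show $\eqcl{\rho}\cap\Sol\neq\emptyset$. If some member of $\eqcl{\rho}$ was returned as a solution, we are done. Otherwise no member of $\eqcl{\rho}$ is ever returned, so line~\ref{lineEXACTLY-pr} never prunes $\eqcl{\rho}$; hence if $\rho$ were pruned it would be via line~\ref{lineGAMMA-pr}, i.e. $\rho\sat\overset{\simeq}{\gen}(\sigma)$ for some guessed $\sigma$ with $\sigma\sat\Phi$ and $M_\sigma\unsat\psi$ — but that contradicts correct-exclusion of $\overset{\simeq}{\gen}$. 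So $\rho$ is never pruned; since the loop terminates with $\Phi$ unsatisfiable and $\rho$ would still satisfy it, $\rho$ must have been guessed as a candidate at some point, and passing the model-check in line~\ref{lineMC-pr} it would be returned — so $\rho\in\Sol\subseteq\eqcl{\rho}\cap\Sol$, contradiction with the "otherwise" assumption, unless some member of $\eqcl{\rho}$ was returned, which is exactly what we wanted.

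The main obstacle I anticipate is being careful about the phrase "complete up to isomorphisms" — the paper never states its formal meaning in the excerpt, so the proof must pin it down (presumably: for every correct $\rho$, the returned set $\Sol$ contains some $\rho'\simeq\rho$) and then show the argument above actually yields it, handling the interaction with Property~\ref{inv:sol} (at most one solution per class) so that the statement is "exactly one representative per nonempty class of correct completions." A secondary subtlety is justifying that $\overset{\simeq}{\gen}$ is well-defined as a finite propositional formula — this needs $\eqcl{\rho}$ to be finite, which holds because $\sem{\Phi}$ is finite (it is the set of assignments over finitely many boolean variables). Everything else is bookkeeping on top of the already-proven generic results.
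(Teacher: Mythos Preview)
Your proposal is correct and follows the natural route: establish that $\overset{\simeq}{\gen}$ is itself proper, then invoke the generic soundness/termination/completeness results for \algopr{}. The paper does not actually supply a proof for this theorem (it is stated without proof), so there is no alternative argument to compare against; your decomposition is exactly the one the surrounding infrastructure (Lemma~\ref{termination}, Theorem~\ref{sounds-and-complete}, the isomorphism-invariance assumption after Lemma~\ref{lemma_isomorphism_preserves_traces}) is set up to support.

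Two small cleanups. First, in your final paragraph you write ``$\rho\in\Sol\subseteq\eqcl{\rho}\cap\Sol$'', which is not what you mean; you want $\rho\in\Sol$ and $\rho\in\eqcl{\rho}$, hence $\rho\in\eqcl{\rho}\cap\Sol$. Second, the step ``each $\sigma'\in\eqcl{\sigma}$ satisfies $\Phi$'' is indeed the only place where an implicit assumption is used (that $\sem{\Phi}$ is closed under $\simeq$); you flag it, and the paper's remark that the equivalence classes ``partition $\Phi$'' is the intended justification, so this is fine.
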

\begin{theorem}
\label{thm_properties123}
    If $\gen$ is proper, then \algovpr{\Phi}{\psi}{\overset{\simeq}{\gen}} maintains Properties~\ref{inv:sol} and~\ref{inv:cand}. Furthermore, the algorithm maintains Property~\ref{inv:gen} as an invariant.
\end{theorem}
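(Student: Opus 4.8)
The plan is to treat the three properties separately, leaning on one simple structural fact about the equivalence closure: $\overset{\simeq}{\gen}$ is \emph{constant on $\simeq$-classes}. Indeed, if $\rho\simeq\sigma$ then $\eqcl{\rho}=\eqcl{\sigma}$ (since $\simeq$ is an equivalence relation), so $\overset{\simeq}{\gen}(\rho)$ and $\overset{\simeq}{\gen}(\sigma)$ are disjunctions over the same index set $\eqcl{\rho}=\eqcl{\sigma}$, hence logically equivalent; in particular $\sem{\overset{\simeq}{\gen}(\rho)}=\sem{\overset{\simeq}{\gen}(\sigma)}$. Property~\ref{inv:sol} then needs no new argument: it was already observed that \algopr{} satisfies it regardless of its parameters, the reason being that each time a solution $\sigma$ is returned, line~\ref{lineEXACTLY-pr} prunes all of $\eqcl{\sigma}$, so no later candidate --- and therefore no later solution --- can be isomorphic to $\sigma$. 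For Property~\ref{inv:gen}, consider any partial run and any invocation $\Phi:=\Phi\wedge\neg\overset{\simeq}{\gen}(\sigma)$ that has already occurred: it placed $\sem{\overset{\simeq}{\gen}(\sigma)}$ into $\Pruned$, and by the constancy fact $\sem{\overset{\simeq}{\gen}(\rho)}=\sem{\overset{\simeq}{\gen}(\sigma)}\subseteq\Pruned$ for every $\rho\simeq\sigma$; since $\Pruned$ only grows, this persists, so Property~\ref{inv:gen} holds as an invariant.

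For Property~\ref{inv:cand} I plan to argue directly, so as to rely on as little as possible. Suppose, after termination (which Theorem~\ref{thm_correctness} guarantees), some $\simeq$-class contained two distinct candidates $\rho_1\neq\rho_2$, and say $\rho_1$ was guessed first. At $\rho_1$'s iteration there are two cases. If $M_{\rho_1}\sat\psi$, line~\ref{lineEXACTLY-pr} prunes $\eqcl{\rho_1}$, which contains $\rho_2$. If $M_{\rho_1}\unsat\psi$, line~\ref{lineGAMMA-pr} prunes $\overset{\simeq}{\gen}(\rho_1)$; since $\rho_2\in\eqcl{\rho_1}$, the disjunct $\gen(\rho_2)$ occurs in $\overset{\simeq}{\gen}(\rho_1)$, and $\rho_2\sat\gen(\rho_2)$ by self-inclusion of $\gen$ (applicable because $\rho_2\sat\Phi$, as $\rho_2$ is a candidate, and $M_{\rho_2}\unsat\psi$ by isomorphism-invariance of $\psi$), so $\rho_2$ is pruned. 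Either way $\rho_2$ is excluded from $\Phi$ before $\rho_1$'s iteration ends, hence $\SAT{\Phi}$ can never return $\rho_2$ afterwards --- contradicting that $\rho_2$ is a (later) candidate. Alternatively, one can invoke the earlier theorem that a proper generalizer maintaining Property~\ref{inv:gen} as an invariant also maintains Property~\ref{inv:cand}, applied to $\overset{\simeq}{\gen}$; this requires first checking that $\overset{\simeq}{\gen}$ is itself proper, which is the content of the next paragraph.

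The delicate step, and the one I expect to require the most care, is verifying properness of $\overset{\simeq}{\gen}$ (needed for the alternative route above and, implicitly, for Theorem~\ref{thm_correctness}). Self-inclusion is immediate, since $\gen(\rho)$ is a disjunct of $\overset{\simeq}{\gen}(\rho)$ and $\rho\sat\gen(\rho)$. For correct-exclusion, a correct completion $\tau$ with $\tau\sat\overset{\simeq}{\gen}(\sigma)$ would satisfy $\gen(\sigma')$ for some $\sigma'\simeq\sigma$; here $M_{\sigma'}\unsat\psi$ follows from $M_\sigma\unsat\psi$ by isomorphism-invariance of the specification (Lemma~\ref{lemma_isomorphism_preserves_traces} and the accompanying assumption), and $\sigma'\sat\Phi$ follows because $\Phi$ is closed under $\simeq$ --- this is exactly where the paper's default choice of the permutable set $A$ (non-initial states with no transitions in $M_0$) is used: $\Phi_{\Delta_0}$ is unaffected by permuting states of $A$, and the structural well-formedness constraints (determinism, deadlock-freedom, the input/output-state dichotomy, input-enabledness) are invariant under state renaming. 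With $\sigma'\sat\Phi$ and $M_{\sigma'}\unsat\psi$ in hand, correct-exclusion of $\gen$ yields $\tau\unsat\gen(\sigma')$, a contradiction. Note that the direct proof of Property~\ref{inv:cand} above sidesteps this technicality entirely, using only self-inclusion of $\gen$ and invariance of $\psi$, while the monotonicity of $\Pruned$ is what upgrades the per-step facts for Property~\ref{inv:gen} into a genuine invariant.
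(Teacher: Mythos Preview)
Your proof is correct and the approach is precisely the one the paper's framework is set up to support, though the paper itself states Theorem~\ref{thm_properties123} without proof. Your key observation that $\overset{\simeq}{\gen}$ is constant on $\simeq$-classes immediately yields Property~\ref{inv:gen}, and Property~\ref{inv:sol} is indeed free from the structure of \algopr{}. For Property~\ref{inv:cand} you give two routes: the direct argument and the reduction via the unnamed theorem (``Property~\ref{inv:gen} as invariant $+$ proper $\Rightarrow$ Property~\ref{inv:cand}''). Both are sound; the second is clearly what the paper intends, since that theorem is stated for exactly this purpose. Your verification that $\overset{\simeq}{\gen}$ is proper is also correct, and you are right to flag that correct-exclusion requires $\Phi$ to be closed under $\simeq$ --- this is an implicit assumption the paper relies on via the default choice of $A$, and you identify it accurately. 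The direct argument for Property~\ref{inv:cand} is a nice addition since it avoids that closure assumption altogether, needing only self-inclusion of $\gen$ and isomorphism-invariance of $\psi$.
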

\begin{theorem}
\label{thm_property4}
    If $\gen$ is both proper and invertible, then: (1) $\overset{\simeq}{\gen}$ is invertible;
    (2) \algovpr{\Phi}{\psi}{\overset{\simeq}{\gen}} maintains Property~\ref{inv:pru} as an invariant.
\end{theorem}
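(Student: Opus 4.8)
The plan is to prove the two parts in sequence, using part~(1) to feed the hypothesis of Theorem~\ref{invertible} for part~(2). For part~(1), I must show that $\overset{\simeq}{\gen}$ is invertible given that $\gen$ is proper and invertible. The natural first step is to unfold the definition of the inverse: $\sigma \sat (\overset{\simeq}{\gen})^{-1}(\tau)$ iff $\tau \sat \overset{\simeq}{\gen}(\sigma)$ iff there exists $\sigma'' \in \eqcl{\sigma}$ with $\tau \sat \gen(\sigma'')$, i.e.\ iff $\sigma'' \sat \gen^{-1}(\tau)$ for some $\sigma'' \simeq \sigma$. In other words, $(\overset{\simeq}{\gen})^{-1}(\tau)$ is (equivalent to) the equivalence closure $\overset{\simeq}{(\gen^{-1}(\tau))}$ of the formula $\gen^{-1}(\tau)$ — here I am reading the ``equivalence closure'' operation as applied to a formula rather than a generalizer, which makes sense because it just replaces $\sem{\phi}$ by $\{\sigma \mid \exists \sigma' \simeq \sigma,\ \sigma' \sat \phi\}$. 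Now take $\tau \simeq \tau'$ both satisfying $\Phi$; I must show $(\overset{\simeq}{\gen})^{-1}(\tau)$ and $(\overset{\simeq}{\gen})^{-1}(\tau')$ are equivalent up to isomorphism. By invertibility of $\gen$, $\gen^{-1}(\tau)$ and $\gen^{-1}(\tau')$ are equivalent up to isomorphism; and taking equivalence closures of two formulas that are equivalent up to isomorphism yields two formulas that are equivalent up to isomorphism — in fact the equivalence closure of a formula is already ``saturated'' under $\simeq$, so two such closures are equivalent up to isomorphism precisely when the originals are. This establishes part~(1).

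For part~(2), the work is essentially already done: $\overset{\simeq}{\gen}$ is a generalizer (noted in the text), it is proper whenever $\gen$ is (this should follow because self-inclusion of $\gen$ gives self-inclusion of the larger disjunction, and correct-exclusion is inherited since every disjunct $\gen(\sigma'')$ with $\sigma'' \simeq \sigma$ excludes correct completions — using that correctness, i.e.\ $M \sat \psi$, is preserved by isomorphism, as assumed after Lemma~\ref{lemma_isomorphism_preserves_traces}), and by part~(1) it is invertible. By Theorem~\ref{thm_properties123}, \algovpr{\Phi}{\psi}{\overset{\simeq}{\gen}} maintains Property~\ref{inv:gen} as an invariant. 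Then Theorem~\ref{invertible}, applied with the generalizer taken to be $\overset{\simeq}{\gen}$ (which is proper and invertible), yields that the algorithm also maintains Property~\ref{inv:pru} as an invariant.

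The main obstacle I anticipate is part~(1), specifically getting the bookkeeping of the ``equivalence closure of a formula'' versus ``equivalence closure of a generalizer'' exactly right, and verifying carefully that closure preserves equivalence-up-to-isomorphism. The key lemma to isolate and prove is: if $\phi$ and $\phi'$ are equivalent up to isomorphism, then so are their equivalence closures $\overset{\simeq}{\phi}$ and $\overset{\simeq}{\phi'}$. This is where one must chase the quantifiers — given $\sigma \sat \overset{\simeq}{\phi}$, pick $\sigma_0 \simeq \sigma$ with $\sigma_0 \sat \phi$, then by the subsumption hypothesis pick $\sigma_0' \sat \phi'$ with $\sigma_0' \simeq \sigma_0$, whence $\sigma_0' \simeq \sigma$ (transitivity) and $\sigma_0' \sat \overset{\simeq}{\phi'}$, giving the required witness; symmetry handles the other direction. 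A minor secondary point to be careful about is that the ``$\sat \Phi$'' side conditions in the definitions of invertibility and of the invariants line up — but since $\simeq$-classes partition $\sem{\Phi}$ and $\overset{\simeq}{\gen}$ only ever disjoins over members of a class, nothing leaves $\sem{\Phi}$ that was not already there.
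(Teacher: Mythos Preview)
Your proposal is correct and follows precisely the approach implicit in the paper's organization: the paper states Theorem~\ref{thm_property4} without an explicit proof, but its two-part structure makes clear that part~(2) is meant to follow from part~(1) via Theorem~\ref{invertible} (together with Theorem~\ref{thm_properties123} for Property~\ref{inv:gen} and properness of $\overset{\simeq}{\gen}$), exactly as you argue. Your unfolding of $(\overset{\simeq}{\gen})^{-1}(\tau)$ as the $\simeq$-closure of $\gen^{-1}(\tau)$ and the quantifier chase for the key lemma are sound; in fact your argument shows the closures are even logically equivalent (not merely equivalent up to isomorphism), which is stronger than needed.
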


\subsubsection{Computation Options for $\overset{\simeq}{\gen}$}

The naive way to compute $\overset{\simeq}{\gen}$ is to iterate over all $\sigma_1,\sigma_2,\cdots,\sigma_k\in\eqcl{\rho}$, compute each $\gen(\sigma_i)$, and then return the disjunction of all $\gen(\sigma_i)$. We call this the {\em naive generalization} approach.
The problem with this approach is that we have to call $\gen$ as many as $n!$ times, where $n$ is the number of permutable states. The experimental results in Section~\ref{sec:eval} indicate empirically that this naive method does not scale well.

We thus propose a better approach, which is
{\em incremental},
in the sense that we only have to compute $\gen$ once,
for $\gen(\sigma_1)$; we can then perform simple {\em syntactic transformations} on $\gen(\sigma_1)$ to obtain
$\gen(\sigma_2)$, $\gen(\sigma_3)$, and so on. As we will show, these transformations are much more efficient than computing each $\gen(\sigma_i)$ from scratch.
So-called {\em permuters} formalize this idea:

\subsubsection{Permuters}

A {\em permuter} is a function $\perm$
that takes as input an assignment $\rho$ and the generalization $\gen(\sigma)$ for some $\sigma\simeq\rho$, and returns a propositional logic formula $\perm(\sigma,\gen(\rho))$ such that
$\forall \rho\sat\Phi, \forall \sigma\simeq\rho :: M_\rho\unsat\psi \to \perm(\rho,\gen(\sigma)) \equiv \gen(\rho)$.
That is, assuming $\rho$ is ``bad'' ($M_\rho\unsat\psi$), $\perm(\rho,\gen(\sigma))$ is equivalent to $\gen(\rho)$.
However, contrary to $\gen$, $\perm$ can use the extra information $\gen(\sigma)$ to compute the generalization of $\rho$.
Then, instead of $\overset{\simeq}{\gen}(\rho)$, we can compute the logically equivalent formula
$$\gen_\perm(\rho) := \bigvee_{\sigma\in\eqcl{\rho}}\perm(\sigma,\gen(\rho)).$$

\begin{theorem}
\label{thm_permuter}
Theorems~\ref{thm_correctness},~\ref{thm_properties123}, and~\ref{thm_property4}
also hold for \algovpr{\Phi}{\psi}{\gen_\perm}.
\end{theorem}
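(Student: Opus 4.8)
The plan is to show that replacing $\overset{\simeq}{\gen}$ with $\gen_\perm$ changes nothing that matters: the two generalizers are logically equivalent on exactly the inputs that the algorithm ever feeds them, so every run of $\algovpr{\Phi}{\psi}{\gen_\perm}$ is indistinguishable from a run of $\algovpr{\Phi}{\psi}{\overset{\simeq}{\gen}}$. First I would establish the key lemma: for every $\rho$ with $\rho\sat\Phi$ and $M_\rho\unsat\psi$, we have $\gen_\perm(\rho) \equiv \overset{\simeq}{\gen}(\rho)$. This follows directly from the defining property of a permuter: for each $\sigma\in\eqcl{\rho}$ we have $\sigma\simeq\rho$, hence $\perm(\sigma,\gen(\rho)) \equiv \gen(\sigma)$ (note that isomorphism is symmetric, so the roles of $\sigma$ and $\rho$ in the permuter spec are interchangeable, and $M_\rho\unsat\psi$ is exactly the hypothesis the spec requires). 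Taking the disjunction over $\sigma\in\eqcl{\rho}$ on both sides gives $\gen_\perm(\rho) \equiv \overset{\simeq}{\gen}(\rho)$.

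Next I would observe that $\gen$ is only ever called (line~\ref{lineGAMMA-pr}) on assignments $\sigma$ satisfying the current $\Phi$ — hence the original $\Phi$, since $\Phi$ only shrinks — with $M_\sigma\unsat\psi$ (the else-branch of the model-check). So on the entire trajectory of inputs the algorithm produces, $\gen_\perm$ and $\overset{\simeq}{\gen}$ return logically equivalent formulas. Since the update on line~\ref{lineGAMMA-pr} is $\Phi := \Phi \wedge \neg\gen(\cdot)$ and conjoining logically equivalent formulas yields logically equivalent $\Phi$'s, a straightforward induction on the number of loop iterations shows that the two algorithms maintain logically equivalent values of $\Phi$ at every step, make the same $\SATnv$ and $\MC$ decisions (up to the fixed tie-breaking of the SAT solver, which sees equivalent formulas), and therefore produce identical $\Sol$, $\Cand$, and $\Pruned$. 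The semantic sets $\sem{\Phi}$, $\sem{\gen_\perm(\sigma)}$, $\sem{\overset{\simeq}{\gen}(\sigma)}$ coincide throughout.

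Given this "observational equivalence," each of the three theorems transfers mechanically. For Theorem~\ref{thm_correctness}: soundness and completeness up to isomorphism are statements about $\Sol$ and its relation to $\sem{\Phi}$ and $\psi$, all of which are identical; termination holds because $\gen_\perm$ is proper (it has the same semantics as $\overset{\simeq}{\gen}$, which is proper by the argument behind Theorem~\ref{thm_correctness}, since $\gen$ proper implies self-inclusion and correct-exclusion survive the disjunction over an equivalence class), so Lemma~\ref{termination} applies directly. For Theorem~\ref{thm_properties123}: Properties~\ref{inv:sol} and~\ref{inv:cand} are predicates over $\Sol$ and $\Cand$, which are unchanged; Property~\ref{inv:gen} is a predicate over $\Pruned$ and $\sem{\gen_\perm(\rho)}$, and since $\Pruned$ is unchanged and $\sem{\gen_\perm(\rho)} = \sem{\overset{\simeq}{\gen}(\rho)}$ for every relevant $\rho$, the invariant holds at exactly the same partial executions. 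For Theorem~\ref{thm_property4}: part (1), invertibility of $\gen_\perm$, follows because $\gen_\perm^{-1}(\tau)$ is defined purely in terms of $\sem{\gen_\perm(\cdot)}$, which matches $\sem{\overset{\simeq}{\gen}(\cdot)}$, so $\gen_\perm^{-1}$ and $\overset{\simeq}{\gen}^{-1}$ agree and invertibility of the latter (Theorem~\ref{thm_property4}) carries over; part (2), Property~\ref{inv:pru}, is then immediate from Theorem~\ref{invertible} applied to the proper, invertible generalizer $\gen_\perm$, together with the fact that Property~\ref{inv:gen} holds (just shown).

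The main obstacle — really the only subtle point — is making precise the claim that the two algorithms behave identically when they call a nondeterministic external oracle $\SATnv$. The clean way to handle it is to fix the SAT solver's behavior as a deterministic function of the formula it is handed; since at each step the two runs hand it logically equivalent formulas, one must either assume $\SATnv$ depends only on the semantics of its argument, or more carefully track that the syntactic forms of $\Phi$ stay in lockstep (they do: both runs start from the same $\Phi$ and append conjuncts $\neg\gen_\perm(\sigma)$ versus $\neg\overset{\simeq}{\gen}(\sigma)$, and one can arrange $\gen_\perm$ to emit the same syntactic formula as $\overset{\simeq}{\gen}$, or simply quantify the theorem statement over all oracle resolutions). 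I would state the small assumption explicitly (as the paper already does implicitly when it treats $\SATnv$ as a subroutine) and then the coupling argument goes through without friction.
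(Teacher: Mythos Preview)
Your proposal is correct and follows the same approach as the paper, whose proof is the single line ``Follows from the fact that for any $\rho$, $\gen_\perm(\rho)\equiv\overset{\simeq}{\gen}(\rho)$.'' You expand this considerably and are in fact more careful than the paper on two points: you correctly restrict the equivalence to assignments $\rho$ with $M_\rho\unsat\psi$ (which is all the permuter specification guarantees, and all the algorithm needs), and you flag the assumption that $\SATnv$ behaves identically on logically equivalent inputs, which the paper leaves implicit.
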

\begin{proof}
Follows from the fact that for any $\rho$, $\gen_\perm(\rho)\equiv\overset{\simeq}{\gen}(\rho)$.
\qed
\end{proof}

\subsubsection{A Concrete Permuter for LTS}

We now explain how to compute $\perm$ concretely in our application domain, namely LTS.
Let $M_0$ be an incomplete LTS.
Let $\sigma_1,\sigma_2$ be two assignments encoding completions $M_{\sigma_1}$ and $M_{\sigma_2}$ of $M_0$.
Suppose $M_{\sigma_1} \overset{A}{\simeq} M_{\sigma_2}$. Recall that $A$ is the set of permutable states (the non-initial states with no incoming/outgoing transitions by default).
Then there is a permutation $f : A\to A$, such that applying $f$ to the states of $M_{\sigma_1}$ yields $M_{\sigma_2}$.
$f$ allows us to transform one LTS to another, but it also allows us to transform the generalization formula for $\sigma_1$, namely $\gen(\sigma_1)$, to the one for $\sigma_2$, namely $\gen(\sigma_2)$.

For example, let $M_0$ be the leftmost LTS in Fig.~\ref{fig:opt-example}, with alphabet $\Sigma = \{a\}$, states $Q = \{p_0,p_1,p_2,p_3\}$, initial state $p_0$, and the empty transition relation.
Let $M_{\sigma_1}$ and $M_{\sigma_2}$ be the remaining LTSs shown in Fig.~\ref{fig:opt-example}.
Let $A = \{p_1, p_2, p_3\}$ and let $f$ be the permutation mapping $p_1$ to $p_3$, $p_3$ to $p_2$, and $p_2$ to $p_1$.
Then $M_{\sigma_1} \overset{A}{\simeq} M_{\sigma_2}$ and $f$ is the witness to this isomorphism.

Let $\gen(\sigma_1) = (\bvar{}{p_0}{a}{p_1}) \wedge (\bvar{}{p_1}{a}{p_2}) \wedge (\bvar{}{p_2}{a}{p_3})$.
$\gen(\sigma_1)$ captures the four LTSs in Fig.~\ref{fig:M1}.
The key idea is that we can compute $\gen(\sigma_2)$ by transforming $\gen(\sigma_1)$  {\em purely syntactically}.
In particular, we apply the permutation $f$ to all $p_i$ appearing in the variables of the formula.
Doing so, we obtain $\gen(\sigma_2) = (\bvar{}{p_0}{a}{p_3}) \wedge (\bvar{}{p_3}{a}{p_1}) \wedge (\bvar{}{p_1}{a}{p_2})$. This formula in turn captures the four LTSs in Fig.~\ref{fig:M2}, which are exactly the permutations of those in Fig.~\ref{fig:M1} after applying $f$.

We now describe this transformation formally.
Observe that $M_{\sigma_1}$ and $M_{\sigma_2}$ have the same set of states, say $Q$.
We extend the permutation to $f : Q\to Q$
 by defining $f(q) = q$ for all states $q\notin A$. Now, we extend this permutation of states to permutations of the set $V$
 (the set of boolean variables encoding transitions).
Specifically we extend $f$ to permute $V$ by defining:
$f(\bvar{}{p}{a}{q}) := \bvar{}{f(p)}{a}{f(q)}$
and we extend it to propositional formulas by applying it to all variables in the formula.
Then
we define
$\perm_\textit{LTS}(\sigma_2,\gen(\sigma_1)) := f(\gen(\sigma_1))$.

In essence, the permuter $\perm_\textit{LTS}$ identifies the permutation $f$ witnessing the fact that $\sigma_1\simeq\sigma_2$. It then applies $f$ to the variables of $\gen(\sigma_1)$. Applying $f$ to $\gen(\sigma_1)$ is equivalent to applying $f$ to all assignments that satisfy $\gen(\sigma_1)$.

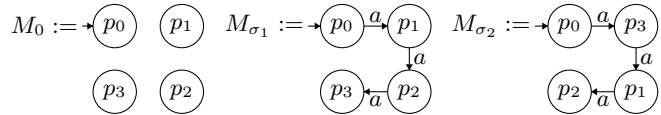
\begin{figure}
    \vspace*{-15pt}

\begin{center}
\begin{tikzpicture}[scale=\tiksize*0.8]
\tikzstyle{every node}+=[inner sep=0pt]
\draw [black] (5.6,-33.6) circle (3);
\draw (5.6,-33.6) node {$p_3$};
\draw [black] (14.9,-33.6) circle (3);
\draw (14.9,-33.6) node {$p_2$};
\draw [black] (5.6,-24.5) circle (3);
\draw (5.6,-24.5) node {$p_0$};
\draw [black] (14.9,-24.5) circle (3);
\draw (14.9,-24.5) node {$p_1$};
\draw [black] (37.1,-33.6) circle (3);
\draw (37.1,-33.6) node {$p_3$};
\draw [black] (37.1,-24.5) circle (3);
\draw (37.1,-24.5) node {$p_0$};
\draw [black] (46.4,-33.6) circle (3);
\draw (46.4,-33.6) node {$p_2$};
\draw [black] (46.4,-24.5) circle (3);
\draw (46.4,-24.5) node {$p_1$};
\draw [black] (68.6,-33.6) circle (3);
\draw (68.6,-33.6) node {$p_2$};
\draw [black] (77.8,-33.6) circle (3);
\draw (77.8,-33.6) node {$p_1$};
\draw [black] (68.6,-24.5) circle (3);
\draw (68.6,-24.5) node {$p_0$};
\draw [black] (77.8,-24.5) circle (3);
\draw (77.8,-24.5) node {$p_3$};
\draw [black] (1.1,-24.5) -- (2.6,-24.5);
\draw (0.6,-24.5) node [left] {$M_0 :=$};
\fill [black] (2.6,-24.5) -- (1.8,-24) -- (1.8,-25);

\draw (41.75,-24) node [above] {$a$};
\draw [black] (40.1,-24.5) -- (43.4,-24.5);
\fill [black] (43.4,-24.5) -- (42.6,-24) -- (42.6,-25);

\draw (46.9,-29.05) node [right] {$a$};
\draw [black] (46.4,-27.5) -- (46.4,-30.6);
\fill [black] (46.4,-30.6) -- (46.9,-29.8) -- (45.9,-29.8);

\draw (41.75,-34.1) node [below] {$a$};
\draw [black] (43.4,-33.6) -- (40.1,-33.6);
\fill [black] (40.1,-33.6) -- (40.9,-34.1) -- (40.9,-33.1);

\draw (73.25,-24) node [above] {$a$};
\draw [black] (71.6,-24.5) -- (74.8,-24.5);
\fill [black] (74.8,-24.5) -- (74,-24) -- (74,-25);

\draw (78.3,-29.05) node [right] {$a$};
\draw [black] (77.8,-27.5) -- (77.8,-30.6);
\fill [black] (77.8,-30.6) -- (78.3,-29.8) -- (77.3,-29.8);

\draw (73.25,-34.1) node [below] {$a$};
\draw [black] (74.8,-33.6) -- (71.6,-33.6);
\fill [black] (71.6,-33.6) -- (72.4,-34.1) -- (72.4,-33.1);

\draw [black] (32.4,-24.5) -- (34.1,-24.5);
\draw (31.9,-24.5) node [left] {$M_{\sigma_1}:=$};
\fill [black] (34.1,-24.5) -- (33.3,-24) -- (33.3,-25);
\draw [black] (63.8,-24.5) -- (65.6,-24.5);
\draw (63.3,-24.5) node [left] {$M_{\sigma_2}:=$};
\fill [black] (65.6,-24.5) -- (64.8,-24) -- (64.8,-25);
\end{tikzpicture}
\end{center}
    \vspace*{-30pt}
    \caption{An incomplete LTS $M_0$ and two possible completions, $M_{\sigma_1}$ and $M_{\sigma_2}$}
    \label{fig:opt-example}
    \vspace*{-15pt}
\end{figure}
\begin{figure}

\begin{center}
\begin{tikzpicture}[scale=\tiksize*0.8]
\tikzstyle{every node}+=[inner sep=0pt]

    \draw [black] (5.6,-33.6) circle (3);
    \draw (5.6,-33.6) node {$p_3$};
    \draw [black] (14.9,-33.6) circle (3);
    \draw (14.9,-33.6) node {$p_2$};
    \draw [black] (5.6,-24.5) circle (3);
    \draw (5.6,-24.5) node {$p_0$};
    \draw [black] (14.9,-24.5) circle (3);
    \draw (14.9,-24.5) node {$p_1$};
    \draw [black] (27.1,-33.6) circle (3);
    \draw (27.1,-33.6) node {$p_3$};
    \draw [black] (27.1,-24.5) circle (3);
    \draw (27.1,-24.5) node {$p_0$};
    \draw [black] (36.3,-33.6) circle (3);
    \draw (36.3,-33.6) node {$p_2$};
    \draw [black] (36.3,-24.5) circle (3);
    \draw (36.3,-24.5) node {$p_1$};
    \draw [black] (48.5,-33.6) circle (3);
    \draw (48.5,-33.6) node {$p_3$};
    \draw [black] (57.7,-33.6) circle (3);
    \draw (57.7,-33.6) node {$p_2$};
    \draw [black] (48.5,-24.5) circle (3);
    \draw (48.5,-24.5) node {$p_0$};
    \draw [black] (57.7,-24.5) circle (3);
    \draw (57.7,-24.5) node {$p_1$};
    \draw [black] (69.8,-33.6) circle (3);
    \draw (69.8,-33.6) node {$p_3$};
    \draw [black] (79,-33.6) circle (3);
    \draw (79,-33.6) node {$p_2$};
    \draw [black] (69.8,-24.5) circle (3);
    \draw (69.8,-24.5) node {$p_0$};
    \draw [black] (79,-24.5) circle (3);
    \draw (79,-24.5) node {$p_1$};
    \draw [black] (1.4,-24.5) -- (2.6,-24.5);
    \fill [black] (2.6,-24.5) -- (1.8,-24) -- (1.8,-25);
    \draw [black] (30.1,-24.5) -- (33.3,-24.5);
    \fill [black] (33.3,-24.5) -- (32.5,-24) -- (32.5,-25);
    \draw (31.7,-24) node [above] {$a$};
    \draw [black] (36.3,-27.5) -- (36.3,-30.6);
    \fill [black] (36.3,-30.6) -- (36.8,-29.8) -- (35.8,-29.8);
    \draw (35.8,-29.05) node [left] {$a$};
    \draw [black] (33.3,-33.6) -- (30.1,-33.6);
    \fill [black] (30.1,-33.6) -- (30.9,-34.1) -- (30.9,-33.1);
    \draw (31.7,-34.1) node [below] {$a$};
    \draw [black] (51.5,-24.5) -- (54.7,-24.5);
    \fill [black] (54.7,-24.5) -- (53.9,-24) -- (53.9,-25);
    \draw (53.1,-24) node [above] {$a$};
    \draw [black] (57.7,-27.5) -- (57.7,-30.6);
    \fill [black] (57.7,-30.6) -- (58.2,-29.8) -- (57.2,-29.8);
    \draw (58.2,-29.05) node [right] {$a$};
    \draw [black] (54.7,-33.6) -- (51.5,-33.6);
    \fill [black] (51.5,-33.6) -- (52.3,-34.1) -- (52.3,-33.1);
    \draw (53.1,-34.1) node [below] {$a$};
    \draw [black] (22.4,-24.5) -- (24.1,-24.5);
    \fill [black] (24.1,-24.5) -- (23.3,-24) -- (23.3,-25);
    \draw [black] (43.7,-24.5) -- (45.5,-24.5);
    \fill [black] (45.5,-24.5) -- (44.7,-24) -- (44.7,-25);
    \draw [black] (8.6,-24.5) -- (11.9,-24.5);
    \fill [black] (11.9,-24.5) -- (11.1,-24) -- (11.1,-25);
    \draw (10.25,-24) node [above] {$a$};
    \draw [black] (14.9,-27.5) -- (14.9,-30.6);
    \fill [black] (14.9,-30.6) -- (15.4,-29.8) -- (14.4,-29.8);
    \draw (15.4,-29.05) node [right] {$a$};
    \draw [black] (11.9,-33.6) -- (8.6,-33.6);
    \fill [black] (8.6,-33.6) -- (9.4,-34.1) -- (9.4,-33.1);
    \draw (10.25,-34.1) node [below] {$a$};
    \draw [black] (5.6,-30.6) -- (5.6,-27.5);
    \fill [black] (5.6,-27.5) -- (5.1,-28.3) -- (6.1,-28.3);
    \draw (5.1,-29.05) node [left] {$a$};
    \draw [black] (29.23,-31.49) -- (34.17,-26.61);
    \fill [black] (34.17,-26.61) -- (33.25,-26.82) -- (33.95,-27.53);
    \draw (32.67,-29.53) node [below] {$a$};
    \draw [black] (50.592,-31.507) arc (118.60039:61.39961:5.239);
    \fill [black] (55.61,-31.51) -- (55.14,-30.69) -- (54.67,-31.56);
    \draw (53.1,-30.37) node [above] {$a$};
    \draw [black] (79,-27.5) -- (79,-30.6);
    \fill [black] (79,-30.6) -- (79.5,-29.8) -- (78.5,-29.8);
    \draw (79.5,-29.05) node [right] {$a$};
    \draw [black] (72.8,-24.5) -- (76,-24.5);
    \fill [black] (76,-24.5) -- (75.2,-24) -- (75.2,-25);
    \draw (74.4,-24) node [above] {$a$};
    \draw [black] (76,-33.6) -- (72.8,-33.6);
    \fill [black] (72.8,-33.6) -- (73.6,-34.1) -- (73.6,-33.1);
    \draw (74.4,-34.1) node [below] {$a$};
    \draw [black] (65.2,-24.5) -- (66.8,-24.5);
    \fill [black] (66.8,-24.5) -- (66,-24) -- (66,-25);
    \draw [black] (70.785,-30.779) arc (188.48384:-99.51616:2.25);
    \draw (75.14,-27.4) node [right] {$a$};
    \fill [black] (72.64,-32.67) -- (73.5,-33.04) -- (73.36,-32.05);

\end{tikzpicture}
\end{center}
    \vspace*{-30pt}
    \caption{LTSs represented by $(\bvar{}{p_0}{a}{p_1}) \wedge (\bvar{}{p_1}{a}{p_2}) \wedge (\bvar{}{p_2}{a}{p_3})$}
    \label{fig:M1}
\end{figure}
\begin{figure}

\begin{center}
\begin{tikzpicture}[scale=\tiksize*0.8]
\tikzstyle{every node}+=[inner sep=0pt]

    \draw [black] (5.6,-33.6) circle (3);
    \draw (5.6,-33.6) node {$p_2$};
    \draw [black] (14.9,-33.6) circle (3);
    \draw (14.9,-33.6) node {$p_1$};
    \draw [black] (5.6,-24.5) circle (3);
    \draw (5.6,-24.5) node {$p_0$};
    \draw [black] (14.9,-24.5) circle (3);
    \draw (14.9,-24.5) node {$p_3$};
    \draw [black] (27.1,-33.6) circle (3);
    \draw (27.1,-33.6) node {$p_2$};
    \draw [black] (27.1,-24.5) circle (3);
    \draw (27.1,-24.5) node {$p_0$};
    \draw [black] (36.3,-33.6) circle (3);
    \draw (36.3,-33.6) node {$p_1$};
    \draw [black] (36.3,-24.5) circle (3);
    \draw (36.3,-24.5) node {$p_3$};
    \draw [black] (48.5,-33.6) circle (3);
    \draw (48.5,-33.6) node {$p_2$};
    \draw [black] (57.7,-33.6) circle (3);
    \draw (57.7,-33.6) node {$p_1$};
    \draw [black] (48.5,-24.5) circle (3);
    \draw (48.5,-24.5) node {$p_0$};
    \draw [black] (57.7,-24.5) circle (3);
    \draw (57.7,-24.5) node {$p_3$};
    \draw [black] (69.8,-33.6) circle (3);
    \draw (69.8,-33.6) node {$p_2$};
    \draw [black] (79,-33.6) circle (3);
    \draw (79,-33.6) node {$p_1$};
    \draw [black] (69.8,-24.5) circle (3);
    \draw (69.8,-24.5) node {$p_0$};
    \draw [black] (79,-24.5) circle (3);
    \draw (79,-24.5) node {$p_3$};
    \draw [black] (1.4,-24.5) -- (2.6,-24.5);
    \fill [black] (2.6,-24.5) -- (1.8,-24) -- (1.8,-25);
    \draw [black] (30.1,-24.5) -- (33.3,-24.5);
    \fill [black] (33.3,-24.5) -- (32.5,-24) -- (32.5,-25);
    \draw (31.7,-24) node [above] {$a$};
    \draw [black] (36.3,-27.5) -- (36.3,-30.6);
    \fill [black] (36.3,-30.6) -- (36.8,-29.8) -- (35.8,-29.8);
    \draw (35.8,-29.05) node [left] {$a$};
    \draw [black] (33.3,-33.6) -- (30.1,-33.6);
    \fill [black] (30.1,-33.6) -- (30.9,-34.1) -- (30.9,-33.1);
    \draw (31.7,-34.1) node [below] {$a$};
    \draw [black] (51.5,-24.5) -- (54.7,-24.5);
    \fill [black] (54.7,-24.5) -- (53.9,-24) -- (53.9,-25);
    \draw (53.1,-24) node [above] {$a$};
    \draw [black] (57.7,-27.5) -- (57.7,-30.6);
    \fill [black] (57.7,-30.6) -- (58.2,-29.8) -- (57.2,-29.8);
    \draw (58.2,-29.05) node [right] {$a$};
    \draw [black] (54.7,-33.6) -- (51.5,-33.6);
    \fill [black] (51.5,-33.6) -- (52.3,-34.1) -- (52.3,-33.1);
    \draw (53.1,-34.1) node [below] {$a$};
    \draw [black] (22.4,-24.5) -- (24.1,-24.5);
    \fill [black] (24.1,-24.5) -- (23.3,-24) -- (23.3,-25);
    \draw [black] (43.7,-24.5) -- (45.5,-24.5);
    \fill [black] (45.5,-24.5) -- (44.7,-24) -- (44.7,-25);
    \draw [black] (8.6,-24.5) -- (11.9,-24.5);
    \fill [black] (11.9,-24.5) -- (11.1,-24) -- (11.1,-25);
    \draw (10.25,-24) node [above] {$a$};
    \draw [black] (14.9,-27.5) -- (14.9,-30.6);
    \fill [black] (14.9,-30.6) -- (15.4,-29.8) -- (14.4,-29.8);
    \draw (15.4,-29.05) node [right] {$a$};
    \draw [black] (11.9,-33.6) -- (8.6,-33.6);
    \fill [black] (8.6,-33.6) -- (9.4,-34.1) -- (9.4,-33.1);
    \draw (10.25,-34.1) node [below] {$a$};
    \draw [black] (5.6,-30.6) -- (5.6,-27.5);
    \fill [black] (5.6,-27.5) -- (5.1,-28.3) -- (6.1,-28.3);
    \draw (5.1,-29.05) node [left] {$a$};
    \draw [black] (29.23,-31.49) -- (34.17,-26.61);
    \fill [black] (34.17,-26.61) -- (33.25,-26.82) -- (33.95,-27.53);
    \draw (32.67,-29.53) node [below] {$a$};
    \draw [black] (50.592,-31.507) arc (118.60039:61.39961:5.239);
    \fill [black] (55.61,-31.51) -- (55.14,-30.69) -- (54.67,-31.56);
    \draw (53.1,-30.37) node [above] {$a$};
    \draw [black] (79,-27.5) -- (79,-30.6);
    \fill [black] (79,-30.6) -- (79.5,-29.8) -- (78.5,-29.8);
    \draw (79.5,-29.05) node [right] {$a$};
    \draw [black] (72.8,-24.5) -- (76,-24.5);
    \fill [black] (76,-24.5) -- (75.2,-24) -- (75.2,-25);
    \draw (74.4,-24) node [above] {$a$};
    \draw [black] (76,-33.6) -- (72.8,-33.6);
    \fill [black] (72.8,-33.6) -- (73.6,-34.1) -- (73.6,-33.1);
    \draw (74.4,-34.1) node [below] {$a$};
    \draw [black] (65.2,-24.5) -- (66.8,-24.5);
    \fill [black] (66.8,-24.5) -- (66,-24) -- (66,-25);
    \draw [black] (70.785,-30.779) arc (188.48384:-99.51616:2.25);
    \draw (75.14,-27.4) node [right] {$a$};
    \fill [black] (72.64,-32.67) -- (73.5,-33.04) -- (73.36,-32.05);

\end{tikzpicture}
\end{center}
    \vspace*{-30pt}
    \caption{LTSs represented by $(\bvar{}{p_0}{a}{p_3}) \wedge (\bvar{}{p_3}{a}{p_1}) \wedge (\bvar{}{p_1}{a}{p_2})$}
    \label{fig:M2}
\end{figure}

It can be shown that $\perm_\textit{LTS}$ is a permuter for LTS.
It follows then that the concrete instance \algovpr{\Phi}{\psi}{\gen_\perm} (where $\gen := \gen_\textit{LTS}$ and $\perm := \perm_\textit{LTS}$)
satisfies
Theorem~\ref{thm_permuter},
i.e., it is sound, terminating, complete up to isomorphisms, and satisfies all Properties~\ref{inv:sol}-\ref{inv:pru}.

~\
\section{Implementation and Evaluation}
\label{sec:eval}

\vspace{1.1em}
\subsubsection{Implementation and Experimental Setup}

We evaluate the three algorithms discussed so far:
the {\em unoptimized} algorithm \algov{\Phi}{\psi}{\gen_\textit{LTS}} of~\cite{ScenariosHVC2014,sigact2017} (Section~\ref{sec_concreteGCG});
and the {\em naive optimization} \algovpr{\Phi}{\psi}{\overset{\simeq}{\gen}}
and {\em permuter optimization} \algovpr{\Phi}{\psi}{\gen_\pi} algorithms of Section~\ref{subsec_opt}.
These are respectively labeled `unopt.', `naive opt.', and `perm. opt.' in the tables that follow.

In addition, we evaluate the unoptimized algorithm outfitted with an additional optimization, which we call the dead
transition optimization.
We say that a transition of an LTS is {\it dead} if
this transition is never taken in any run.
If $M$ with states $Q$ is correct and has $k$ dead transitions, then there are $\abs{Q}^k$ solutions that are equivalent modulo dead transitions, since we can point a dead transition anywhere while maintaining correctness. The dead transition optimization prunes all solutions which are equivalent modulo dead transitions. It is equivalent to the unoptimized algorithm in cases where there are no solutions or where we are looking for only one solution. Therefore, we evaluate the dead transition optimization side-by-side with the unoptimized solution only when we are enumerating all correct completions.
The naive and permuter optimizations both include the dead transition optimization.

We use~\cite{christos-tool}, the Python implementation of \algov{\Phi}{\psi}{\gen_\textit{LTS}}
made publicly available by the authors of~\cite{ScenariosHVC2014,sigact2017},
and we implement our optimizations on top of~\cite{christos-tool} in order to keep the comparison fair.
The tool can handle completion of distributed systems, rather than of single LTSs.
Distributed systems are represented as networks of communicating LTSs similar to those in~\cite{sigact2017}.
Specifications are represented using safety and liveness (B\"uchi) monitors, again similar to those in~\cite{sigact2017}.
However, let us again mention that our approach is not specific to any particular specification logic; it should allow for performance gains whenever the cost of model-checking is greater than the cost of the simple syntactic transformations applied by the permuter.
We use the SAT solver Z3~\cite{citeZ3} to pick candidates from the search space.
Our experimental results can be reproduced using a publicly available artifact~\cite{updated-tool}.

For our experiments we use the ABP case study as presented in~\cite{sigact2017} as well as our own two phase commit (2PC) case study.
We consider three use cases:
(1) {\em completion enumeration}: enumerate all correct completions;
(2) {\em realizable 1-completion}: return the first correct completion and stop, where we ensure that a correct completion exists;
and (3) {\em unrealizable 1-completion}: return the first correct completion, except that we ensure that none exists (and therefore the tool has to explore the entire candidate space in vain).

We consider a {\em many-process synthesis} scenario, where the goal is to synthesize two or more processes,
and a {\em 1-process synthesis} scenario, where the goal is to synthesize a single process.
In both of these scenarios across both the ABP and 2PC case studies, the synthesized processes are composed with additional environment processes and safety and liveness monitors.
The results of the many-process synthesis scenario are presented shortly.
Due to lack of space, the results of the 1-process synthesis scenario are presented in Appendix~\ref{subsec1proc}.
The
latter
results do not add much additional insight, except that 1-process synthesis tends to take less time.

Each experiment was run on a dedicated 2.40GHz CPU core located on the Northeastern Discovery Cluster.
All times are in seconds, rounded up to the second.

\subsubsection{Many-Process Synthesis Experiments}
\label{subsec2procs}

In all these experiments, there are multiple LTSs that must both be completed.
In the case of ABP:
(1) the incomplete ABP $\textit{Receiver}_0$ of Fig.~\ref{fig:sigact-fig14} without further modification;
(2) an incomplete sender process, which is obtained by removing some set of transitions from process \textit{Sender} of Fig.~\ref{fig:sndr_complete}.
The set of transitions removed from \textit{Sender} are all incoming and all outgoing transitions from all states designated as permutable for that experiment (column $A$ in the tables that follow).
For instance, in experiment $\{s_1,s_2\}$ of Table~\ref{dist-all-table}
we remove all incoming and outgoing transitions from states $s_1$ and $s_2$ of \textit{Sender}, and similarly for the other experiments.
And in the case of 2PC (see Appendix~\ref{2pc_procs} for figures):
(1) the two incomplete 2PC database managers, of Fig.~\ref{fig:incomplete-db-man} without further modification;
(2) an incomplete transaction manager, which is obtained by removing some set of transitions from process \textit{tx. man.} of Fig.~\ref{fig:complete-tx-man}.

\paragraph{Completion Enumeration}

Table~\ref{dist-all-table} presents the results for the completion enumeration use case and many-process synthesis scenario.
Columns labeled
{\em sol.} and {\em iter.} record the number of solutions (i.e., $|\Sol|$) and loop iterations of Algorithm~\ref{fig:newalgo2}
(i.e., the number of candidates $|\Cand|$, i.e., the number of times the SAT routine is called), respectively.
Pilot experiments showed negligible variance across random seeds, so reported times are for one seed.
TO denotes a timeout of 4 hours, in which case $\sfrac{p}{q}$ means the tool produced $p$
out of the total $q$ solutions.
For the dead opt. column, we know that $q = 24\cdot n$, where $n$ is the number of solutions/equivalence classes found by the permuter optimization and $24 = 4!$ is the number of isomorphisms
for 4 states.
Since the naive optimization  produces equivalence classes, $q = n$ for the naive opt. column.

\begin{table}\centering
\scalebox{0.94}{%
\begin{tabular}{|r||rrr||rrr||rrr||rrr||}
\hline
&\multicolumn{3}{c||}{unopt.}
&\multicolumn{3}{c||}{dead opt.}
&\multicolumn{3}{c||}{naive opt.}
&\multicolumn{3}{c||}{perm. opt.}\\
\hline
Case Study;$A$
& sol.          & iter.   & time
& sol.          & iter.   & time
& sol.         & iter.   & time
& sol.        & iter.   & time  \\

2PC;$\{p_1,p_2\}$ & 4 & 536 & 47 & 4 & 536 & 46 & 2 & 274 & 34 & 2 & 274 & 28 \\
2PC;$\{p_2,p_3\}$ & 48 & 1417 & 130 & 4 & 1352 & 124 & 2 & 735 & 93 & 2 & 735 & 77 \\
2PC;$\{p_3,p_4\}$ & 336 & 2852 & 266 & 6 & 2600 & 231 & 3 & 1328 & 161 & 3 & 1328 & 134 \\
2PC;$\{p_4,p_8\}$ & 576 & 1813 & 168 & 4 & 1237 & 112 & 2 & 575 & 75 & 2 & 648 & 66 \\
ABP;$\{s_1,s_2\}$ & 64 & 628 & 27 & 8 & 574 & 21 & 4 & 289 & 18 & 4 & 304 & 12 \\
ABP;$\{s_2,s_3\}$ & 64 & 1859 & 75 & 8 & 1832 & 70 & 4 & 946 & 55 & 4 & 943 & 37 \\
ABP;$\{s_3,s_4\}$ & 32 & 374 & 18 & 4 & 353 & 13 & 2 & 188 & 12 & 2 & 192 & 8 \\
ABP;$\{s_4,s_5\}$ & 32 & 3728 & 177 & 4 & 3638 & 170 & 2 & 1913 & 160 & 2 & 1833 & 93 \\
ABP;$\{s_5,s_6\}$ & 64 & 449 & 27 & 8 & 412 & 21 & 4 & 199 & 18 & 4 & 201 & 11 \\
ABP;$\{s_6,s_7\}$ & 64 & 1518 & 94 & 8 & 1481 & 87 & 4 & 769 & 80 & 4 & 752 & 47 \\
2PC;$\{p_2,p_3,p_4\}$ & 2016 & 17478 & 1896 & 36 & 15646 & 1677 & 6 & 2693 & 719 & 6 & 2693 & 466 \\
2PC;$\{p_3,p_4,p_8\}$ & \sfrac{79391}{-} & 101278 & TO & 36 & 23044 & 2498 & 6 & 4079 & 1064 & 6 & 3997 & 682 \\
ABP;$\{s_1,s_2,s_3\}$ & 192 & 5641 & 226 & 24 & 5499 & 207 & 4 & 968 & 155 & 4 & 937 & 49 \\
ABP;$\{s_2,s_3,s_4\}$ & 3072 & 23025 & 1470 & 48 & 19114 & 934 & 8 & 3639 & 722 & 8 & 3331 & 225 \\
ABP;$\{s_3,s_4,s_5\}$ & 96 & 14651 & 748 & 12 & 15108 & 760 & 2 & 2599 & 567 & 2 & 2520 & 172 \\
ABP;$\{s_4,s_5,s_6\}$ & 1536 & 14405 & 876 & 24 & 13269 & 686 & 4 & 2458 & 554 & 4 & 2215 & 151 \\
ABP;$\{s_5,s_6,s_7\}$ & 192 & 4686 & 287 & 24 & 4559 & 268 & 4 & 809 & 241 & 4 & 748 & 57 \\
2PC;$\{p_1,p_2,p_3,p_4\}$ & \sfrac{8064}{-} & 70250 & TO & 144 & 62280 & 11915 & 6 & 2770 & 2844 & 6 & 2719 & 1564 \\
ABP;$\{s_1,s_2,s_3,s_4\}$ & 12288 & 90031 & 8143 & 192 & 76591 & 5458 & 8 & 3704 & 2931 & 8 & 3271 & 628 \\
ABP;$\{s_3,s_4,s_5,s_6\}$ & 6144 & 59838 & 4777 & 96 & 52935 & 3543 & 4 & 2896 & 2655 & 4 & 2351 & 431 \\
ABP;$\{s_4,s_5,s_6,s_7\}$ & \sfrac{1009}{-} & 108929 & TO & \sfrac{38}{96} & 111834 & TO & \sfrac{2}{4} & 10443 & TO & 4 & 8639 & 7480 \\

\hline
\end{tabular}
}%
\caption{Many-Process Synthesis, Completion Enumeration
}
\label{dist-all-table}
\end{table}

The results in Table~\ref{dist-all-table}
are consistent with our theoretical analyses. When there are 2 permutable states, the naive and permuter optimizations
explore about half the number of candidates
as the dead transitions method. For 3 permutable states, the optimized methods explore about $3!=6$ times fewer candidates.
For 4 permutable states, the optimized methods explore about $4!=24$ times fewer candidates than the dead transitions method in the only experiment where the unoptimized method does not timeout. Notably, the permuter optimization does not timeout on any of these experiments.

\paragraph{Realizable 1-Completion}

Table~\ref{dist-one-table} presents the results for the realizable 1-completion use case (return the first solution found and stop) and many-process synthesis scenario.
Our experiments and those of~\cite{sigact2017} suggest that there is more time variability for this task depending on the random seed provided to Z3. Thus, for Table~\ref{dist-one-table} we run the tools for 10 different random seeds and report average times and number of iterations, rounded up.
In one case (last row of Table~\ref{dist-one-table}), for a single seed out of the 10 seeds,
the program timed out before finding a solution. As the true average is unknown in this case, we report it as TO.

\begin{table}\centering
\begin{tabular}{|r||rr||rr||rr||}
\hline
&\multicolumn{2}{c||}{unopt.}&\multicolumn{2}{c||}{naive opt.}&\multicolumn{2}{c||}{perm. opt.}\\
\hline
Case Study;$A$                     & iter.          & time  & iter.          & time  & iter.         & time \\

 2PC;$\{p_1,p_2\}$ & 199 & 19 & 157 & 20 & 157 & 17 \\
2PC;$\{p_2,p_3\}$ & 483 & 47 & 429 & 55 & 426 & 46 \\
2PC;$\{p_3,p_4\}$ & 798 & 72 & 696 & 84 & 666 & 69 \\
2PC;$\{p_4,p_8\}$ & 380 & 37 & 319 & 44 & 311 & 34 \\
ABP;$\{s_1,s_2\}$ & 111 & 4 & 110 & 7 & 100 & 4 \\
ABP;$\{s_2,s_3\}$ & 220 & 9 & 205 & 13 & 200 & 9 \\
ABP;$\{s_3,s_4\}$ & 106 & 5 & 102 & 7 & 105 & 5 \\
ABP;$\{s_4,s_5\}$ & 1669 & 75 & 909 & 73 & 1202 & 60 \\
ABP;$\{s_5,s_6\}$ & 102 & 5 & 95 & 8 & 102 & 5 \\
ABP;$\{s_6,s_7\}$ & 507 & 28 & 294 & 28 & 294 & 17 \\
2PC;$\{p_2,p_3,p_4\}$ & 440 & 48 & 590 & 147 & 561 & 89 \\
2PC;$\{p_3,p_4,p_8\}$ & 954 & 94 & 861 & 205 & 796 & 121 \\
ABP;$\{s_1,s_2,s_3\}$ & 332 & 12 & 225 & 36 & 240 & 13 \\
ABP;$\{s_2,s_3,s_4\}$ & 2462 & 108 & 904 & 170 & 1028 & 64 \\
ABP;$\{s_3,s_4,s_5\}$ & 2267 & 102 & 1040 & 219 & 819 & 52 \\
ABP;$\{s_4,s_5,s_6\}$ & 2735 & 130 & 1513 & 333 & 1327 & 92 \\
ABP;$\{s_5,s_6,s_7\}$ & 361 & 21 & 264 & 69 & 308 & 22 \\

2PC;$\{p_1,p_2,p_3,p_4\}$ & 806 & 81 & 495 & 387 & 572 & 220 \\
ABP;$\{s_1,s_2,s_3,s_4\}$ & 1957 & 85 & 1068 & 760 & 890 & 122 \\
ABP;$\{s_3,s_4,s_5,s_6\}$ & 5425 & 261 & 1003 & 860 & 1601 & 234 \\
ABP;$\{s_4,s_5,s_6,s_7\}$ & 16098 & 1088 & TO & TO & 4159 & 1158 \\

\hline
\end{tabular}
\caption{Many-Process Synthesis, Realizable 1-Completion}
\label{dist-one-table}
\vspace{-3em}
\end{table}

\paragraph{Unrealizable 1-Completion}

Table~\ref{dist-zero-table} presents the results for the unrealizable 1-completion use case and many-process synthesis scenario.
For these experiments, we artificially modify the ABP \textit{Sender} by completely removing state $s_7$, which results in no correct completion existing. A similar change is applied to \textit{tx. man.} in the case of 2PC. Thus, the tools explore the entire search space and terminate without finding a solution.
As can be seen, the permuter optimization significantly prunes the search space and achieves considerable speedups.

\begin{table}\centering
\begin{tabular}{|r||rr||rr||rr||}
\hline
&\multicolumn{2}{c||}{unopt.}&\multicolumn{2}{c||}{naive opt.}&\multicolumn{2}{c||}{perm. opt.}\\
\hline
Case Study;$A$                     & iter.          & time  & iter.          & time  & iter.         & time  \\

 2PC;$\{p_1,p_2\}$ & 3207 & 292 & 1658 & 206 & 1655 & 175 \\
 2PC;$\{p_2,p_3\}$ & 9792 & 978 & 4996 & 646 & 4982 & 552 \\
 2PC;$\{p_3,p_4\}$ & 14911 & 1527 & 7645 & 1053 & 7589 & 878 \\
 2PC;$\{p_4,p_8\}$ & 5123 & 494 & 2537 & 339 & 2555 & 282 \\
 ABP;$\{s_1,s_2\}$ & 1650 & 58 & 879 & 52 & 853 & 33 \\
 ABP;$\{s_2,s_3\}$ & 4300 & 173 & 2384 & 171 & 2374 & 106 \\
 ABP;$\{s_3,s_4\}$ & 327 & 13 & 173 & 11 & 164 & 7 \\
 ABP;$\{s_4,s_5\}$ & 3108 & 143 & 1592 & 130 & 1710 & 89 \\
 ABP;$\{s_5,s_6\}$ & 333 & 16 & 172 & 15 & 168 & 9 \\
 2PC;$\{p_2,p_3,p_4\}$ & 66088 & TO & 19717 & 10867 & 19850 & 9610 \\
 2PC;$\{p_3,p_4,p_8\}$ & 70586 & TO & 26343 & TO & 26516 & 14340 \\
 ABP;$\{s_1,s_2,s_3\}$ & 20858 & 1022 & 3705 & 798 & 3668 & 253 \\
 ABP;$\{s_2,s_3,s_4\}$ & 58974 & 4021 & 10516 & 2673 & 10496 & 1052 \\
 ABP;$\{s_3,s_4,s_5\}$ & 12323 & 596 & 2231 & 504 & 2167 & 146 \\
 ABP;$\{s_4,s_5,s_6\}$ & 11210 & 557 & 2104 & 491 & 1985 & 136 \\

2PC;$\{p_1,p_2,p_3,p_4\}$ & 67659 & TO & 10365 & TO & 12308 & TO \\
ABP;$\{s_1,s_2,s_3,s_4\}$ & 129264 & TO & 12096 & TO & 14739 & TO \\
ABP;$\{s_3,s_4,s_5,s_6\}$ & 45056 & 2869 & 2466 & 2392 & 2004 & 339 \\

\hline
\end{tabular}
\caption{Many-Process Synthesis, Unrealizable 1-Completion
}
\label{dist-zero-table}
\end{table}

\section{Related Work}
\label{sec:related}

\paragraph{Synthesis of Distributed Protocols:}

Distributed system synthesis has been studied both in the reactive synthesis setting~\cite{PnueliRosner90} and in the setting of discrete-event
systems~\cite{Thistle2005,TripakisIPL}.
More recently, synthesis of distributed protocols has been studied using completion techniques
in~\cite{ScenariosHVC2014,CompletionCAV2015,sigact2017,DBLP:journals/corr/abs-2208-12400}.
\cite{ScenariosHVC2014,sigact2017} study completion of finite-state protocols such as ABP but they do not focus on enumeration.
\cite{CompletionCAV2015} considers infinite-state protocols and focus on synthesis of symbolic expressions (guards and assignments).
None of~\cite{ScenariosHVC2014,CompletionCAV2015,sigact2017} propose any reduction techniques. We propose reduction modulo isomorphisms.

\cite{DBLP:journals/corr/abs-2208-12400} studies synthesis for a class of parameterized distributed agreement-based protocols for which verification is efficiently decidable.
Another version of the paper~\cite{Jaber} considers permutations of process indices.
These are different from our permutations over process states.

Synthesis of parameterized distributed systems is also studied in~\cite{DBLP:journals/acta/MirzaieFJB20} using the notion of {\em cutoffs}, which guarantee that if a property holds for all systems up to a certain size (the cutoff size) then it also holds for systems of any size.
Cutoffs are different from our isomorphism reductions.

\paragraph{Bounded Synthesis:}

The bounded synthesis approach~\cite{FinkbeinerScheweSTTT13} limits the search space of synthesis by setting an upper bound on certain system parameters, and encodes the resulting problem into a satisfiability problem. Bounded synthesis is applicable to many application domains, including distributed system synthesis, and has been successfully used to synthesize systems such as distributed arbiters and dining philosophers~\cite{FinkbeinerScheweSTTT13}. Symmetries have also been exploited in bounded synthesis. Typically, such symmetries encode similarity of processes (e.g., all processes having the same state-transition structure, as in the case of dining philosophers). As such, these symmetries are similar to those exploited in parameterized systems, and different from our LTS isomorphisms.

\paragraph{Symmetry Reductions in Model-Checking:}

Symmetries have been exploited in model-checking~\cite{DBLP:conf/cav/ClarkeEJS98}. The basic idea is to take a model $M$ and construct a new model $M_G$ which has a much smaller state space. This construction exploits the fact that many states in $M$ might be functionally equivalent, in the sense
of
incoming and outgoing transitions.
The key distinction between this work and ours is that our symmetries are over the space of models rather than the space of states of a fixed model.
This distinction allows us to exploit symmetries for completion enumeration rather than model-checking.

\paragraph{Symmetry-Breaking Predicates:}

Symmetry-breaking
predicates
have been used to solve
SAT~\cite{DBLP:conf/kr/CrawfordGLR96},
SMT~\cite{DBLP:journals/corr/abs-1908-00860},
and even graph search problems~\cite{DBLP:conf/synasc/Heule16},
more efficiently.
Our work is related in the sense that we are also trying to prune a search space.
But our approach differs both in the notion of symmetry used (LTS isomorphism) as well as the application domain (distributed protocols).
Moreover,
rather than trying to eliminate all but one member of each equivalence class at the outset, say, by somehow adding a global
(and often prohibitively large) symmetry-breaking formula $\Xi$ to $\Phi$, we do so {\em on-the-fly} for each candidate solution.

\paragraph{Canonical Forms:}

In program synthesis
work~\cite{DBLP:conf/vmcai/SmithA19}, a candidate program is only checked for correctness if it is in some normal form.
\cite{DBLP:conf/vmcai/SmithA19} is not about synthesis of distributed protocols, and as such the normal forms considered there are very different from our LTS isomorphisms.
In particular, as with symmetry-breaking, the normal forms used in~\cite{DBLP:conf/vmcai/SmithA19} are global, defined {\em a-priori} for the entire program domain, whereas our generalizations are computed on-the-fly.
Moreover, the approach of~\cite{DBLP:conf/vmcai/SmithA19} may still
generate two equivalent programs as candidates (prior to verification), i.e., it does not satisfy our Property~\ref{inv:cand}.

\paragraph{Sketching, CEGIS, OGIS, Sciduction:}

Completion algorithms such as GCG belong to the same family of techniques as
sketching~\cite{SolarLezamaSTTT13},
counter-example guided inductive synthesis (CEGIS)~\cite{Sygus13,GulwaniPolozovSingh2017,SolarLezama2006,SolarLezamaSTTT13},
oracle-guided inductive synthesis (OGIS)~\cite{DBLP:journals/acta/JhaS17},
and sciduction~\cite{DBLP:conf/dac/Seshia12}.

\section{Conclusions}
\label{sec:end}

We proposed a novel distributed protocol synthesis approach based on completion enumeration modulo isomorphisms.
Our approach follows the {\em guess-check-generalize} synthesis paradigm, and relies on non-trivial optimizations of the generalization step that exploit state permutations.
These optimizations allow to significantly prune the search space of candidate completions,
achieving speedups of factors approximately 2 to 10 and in some cases completing experiments in minutes instead of hours.
To our knowledge, ours is the only work on distributed protocol enumeration using reductions such as isomorphism.

As future work, we plan to employ this optimized enumeration approach for the synthesis of distributed protocols that achieve not only correctness, but also performance objectives.
We also plan to address the question {\em where do the incomplete processes come from?}
If not provided by the user, such incomplete processes might be automatically generated from example scenarios as in~\cite{ScenariosHVC2014,sigact2017}, or might simply be ``empty skeletons'' of states, without any transitions.
We also plan to extend our approach to infinite-state protocols, as well as application domains beyond protocols,
as Algorithm~\ref{fig:newalgo2} is generic and thus
applicable to a wide class of synthesis domains.

\vspace{1.1em}
\subsubsection{Acknowledgements}
Derek Egolf's research has been initially supported by a Northeastern University PhD fellowship.
This material is based upon work supported by the National Science Foundation Graduate Research Fellowship under Grant No. (1938052).
Any opinion, findings, and conclusions or recommendations expressed in this material are those of the authors(s) and do not necessarily reflect the views of the National Science Foundation. We thank Christos Stergiou for his work on the distributed protocol completion tool that we built upon.

%
%
%
\bibliographystyle{plain}
\bibliography{biblio}
%


\appendix
\section{Appendix}

\subsection{The 2PC Processes}
\label{2pc_procs}

We used the 2PC transaction manager of Fig.~\ref{fig:complete-tx-man} and the incomplete database manager of Fig.~\ref{fig:incomplete-db-man} to obtain our experimental results (cf. page~\pageref{subsec2procs}).

\begin{figure}[H]
\begin{center}
\begin{tikzpicture}[scale=\tiksize]
\tikzstyle{every node}+=[inner sep=0pt]
\draw [black] (9.2,-14.6) circle (3);
\draw (9.2,-14.6) node {$m_0$};
\draw [black] (21.4,-14.6) circle (3);
\draw (21.4,-14.6) node {$m_1$};
\draw [black] (33.6,-14.6) circle (3);
\draw (33.6,-14.6) node {$m_2$};
\draw [black] (45.8,-14.6) circle (3);
\draw (45.8,-14.6) node {$m_3$};
\draw [black] (39.7,-26.8) circle (3);
\draw (39.7,-26.8) node {$m_5$};
\draw [black] (51.9,-26.8) circle (3);
\draw (51.9,-26.8) node {$m_4$};
\draw [black] (27.5,-26.8) circle (3);
\draw (27.5,-26.8) node {$m_6$};
\draw [black] (15.3,-26.8) circle (3);
\draw (15.3,-26.8) node {$m_7$};
\draw [black] (9.2,-38.9) circle (3);
\draw (9.2,-38.9) node {$m_{11}$};
\draw [black] (21.4,-38.9) circle (3);
\draw (21.4,-38.9) node {$m_{10}$};
\draw [black] (33.6,-38.9) circle (3);
\draw (33.6,-38.9) node {$m_9$};
\draw [black] (45.8,-38.9) circle (3);
\draw (45.8,-38.9) node {$m_8$};
\draw [black] (4.1,-14.6) -- (6.2,-14.6);
\draw (3.6,-14.6) node [left] {\textit{tx. man.}\ :=};
\fill [black] (6.2,-14.6) -- (5.4,-14.1) -- (5.4,-15.1);
\draw [black] (12.2,-14.6) -- (18.4,-14.6);
\fill [black] (18.4,-14.6) -- (17.6,-14.1) -- (17.6,-15.1);
\draw (15.3,-14.1) node [above] {$x?$};
\draw [black] (24.4,-14.6) -- (30.6,-14.6);
\fill [black] (30.6,-14.6) -- (29.8,-14.1) -- (29.8,-15.1);
\draw (27.5,-14.1) node [above] {$x_1!$};
\draw [black] (36.6,-14.6) -- (42.8,-14.6);
\fill [black] (42.8,-14.6) -- (42,-14.1) -- (42,-15.1);
\draw (39.7,-14.1) node [above] {$x_2!$};
\draw [black] (48.771,-14.238) arc (90.0819:-90.0819:12.512);
\fill [black] (48.77,-39.26) -- (49.57,-39.76) -- (49.57,-38.76);
\draw (61.8,-26.75) node [right] {$\textit{no}_*?$};
\draw [black] (47.14,-17.28) -- (50.56,-24.12);
\fill [black] (50.56,-24.12) -- (50.65,-23.18) -- (49.75,-23.62);
\draw (48.15,-21.81) node [left] {$\textit{yes}_*?$};
\draw [black] (50.55,-29.48) -- (47.15,-36.22);
\fill [black] (47.15,-36.22) -- (47.96,-35.73) -- (47.06,-35.28);
\draw [black] (48.9,-26.8) -- (42.7,-26.8);
\fill [black] (42.7,-26.8) -- (43.5,-27.3) -- (43.5,-26.3);
\draw (45.8,-26.3) node [above] {$\textit{yes}_*?$};
\draw [black] (36.7,-26.8) -- (30.5,-26.8);
\fill [black] (30.5,-26.8) -- (31.3,-27.3) -- (31.3,-26.3);
\draw (33.6,-26.3) node [above] {$\textit{cm}_1!$};
\draw [black] (24.5,-26.8) -- (18.3,-26.8);
\fill [black] (18.3,-26.8) -- (19.1,-27.3) -- (19.1,-26.3);
\draw (21.4,-26.3) node [above] {$\textit{cm}_2!$};
\draw [black] (13.96,-24.12) -- (10.54,-17.28);
\fill [black] (10.54,-17.28) -- (10.45,-18.22) -- (11.35,-17.78);
\draw (12.95,-19.59) node [right] {$\textit{succ}!$};
\draw [black] (9.2,-35.9) -- (9.2,-17.6);
\fill [black] (9.2,-17.6) -- (8.7,-18.4) -- (9.7,-18.4);
\draw (8.7,-26.75) node [left] {$\textit{fail}!$};
\draw [black] (18.4,-38.9) -- (12.2,-38.9);
\fill [black] (12.2,-38.9) -- (13,-39.4) -- (13,-38.4);
\draw (15.3,-38.4) node [above] {$\textit{ab}_2!$};
\draw [black] (30.6,-38.9) -- (24.4,-38.9);
\fill [black] (24.4,-38.9) -- (25.2,-39.4) -- (25.2,-38.4);
\draw (27.5,-38.4) node [above] {$\textit{ab}_1!$};
\draw [black] (42.8,-38.9) -- (36.6,-38.9);
\fill [black] (36.6,-38.9) -- (37.4,-39.4) -- (37.4,-38.4);
\draw (39.7,-38.4) node [above] {$\textit{yes}_*?$};
\draw (39.7,-36.4) node [above] {$\textit{no}_*?$};
\draw [black] (46.629,-11.729) arc (191.63248:-96.36752:2.25);
\draw (51.81,-9.13) node [above] {$x?$};
\fill [black] (48.58,-13.51) -- (49.47,-13.84) -- (49.27,-12.86);
\draw [black] (7.877,-11.92) arc (234:-54:2.25);
\draw (9.2,-7.35) node [above] {$\textit{yes}_*?,\textit{no}_*?$};
\fill [black] (10.52,-11.92) -- (11.4,-11.57) -- (10.59,-10.98);
\draw [black] (54.257,-28.637) arc (79.80895:-208.19105:2.25);
\draw (56.27,-33.39) node [below] {$x?$};
\fill [black] (51.88,-29.79) -- (51.24,-30.49) -- (52.23,-30.66);
\end{tikzpicture}
\end{center}
    \caption{A completed transaction manager}
    \label{fig:complete-tx-man}
\end{figure}
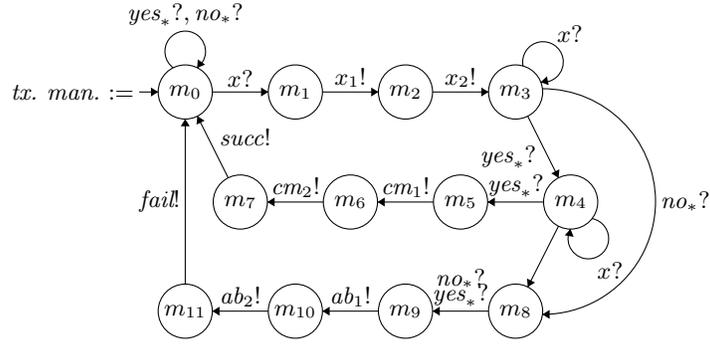


\begin{figure}[H]
\begin{center}
\begin{tikzpicture}[scale=\tiksize]
\tikzstyle{every node}+=[inner sep=0pt]
\draw [black] (13,-27.6) circle (3);
\draw (13,-27.6) node {$b_0^{(i)}$};
\draw [black] (25.2,-27.6) circle (3);
\draw (25.2,-27.6) node {$b_1^{(i)}$};
\draw [black] (37.4,-27.6) circle (3);
\draw (37.4,-27.6) node {$b_2^{(i)}$};
\draw [black] (61.8,-27.6) circle (3);
\draw (61.8,-27.6) node {$b_5^{(i)}$};
\draw [black] (49.6,-27.6) circle (3);
\draw (49.6,-27.6) node {$b_3^{(i)}$};
\draw [black] (49.6,-15.4) circle (3);
\draw (49.6,-15.4) node {$b_4^{(i)}$};
\draw [black] (16,-27.6) -- (22.2,-27.6);
\fill [black] (22.2,-27.6) -- (21.4,-27.1) -- (21.4,-28.1);
\draw (19.1,-27.1) node [above] {$x_i?$};
\draw [black] (28.2,-27.6) -- (34.4,-27.6);
\fill [black] (34.4,-27.6) -- (33.6,-27.1) -- (33.6,-28.1);
\draw (31.3,-27.1) node [above] {$\textit{qry}_i!$};
\draw [black] (40.4,-27.6) -- (46.6,-27.6);
\fill [black] (46.6,-27.6) -- (45.8,-27.1) -- (45.8,-28.1);
\draw (43.5,-27.1) node [above] {$\textit{bd}_i?$};
\draw [black] (52.6,-27.6) -- (58.8,-27.6);
\fill [black] (58.8,-27.6) -- (58,-27.1) -- (58,-28.1);
\draw (55.7,-27.1) node [above] {$\textit{no}_i!$};
\draw [black] (51.72,-17.52) -- (59.68,-25.48);
\fill [black] (59.68,-25.48) -- (59.47,-24.56) -- (58.76,-25.27);
\draw (52.73,-21.98) node [below] {$\textit{yes}_i!$};
\draw [black] (59.046,-28.789) arc (-68.12406:-111.87594:58.095);
\fill [black] (15.75,-28.79) -- (16.31,-29.55) -- (16.68,-28.62);
\draw (37.4,-33.47) node [below] {$\textit{ab}_i?,\mbox{ }\textit{cm}_i?$};
\draw [black] (7.1,-27.6) -- (10,-27.6);
\fill [black] (10,-27.6) -- (9.2,-27.1) -- (9.2,-28.1);
\end{tikzpicture}
\end{center}
    \caption{An incomplete database manager; $i\in\{1,2\}$
    }
    \label{fig:incomplete-db-man}
\end{figure}
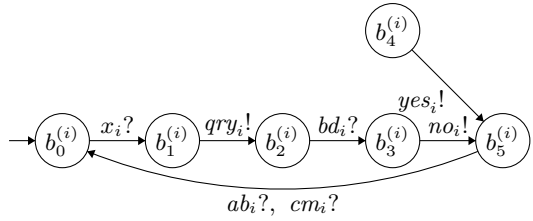

\subsection{1-Process Synthesis Experiments}
\label{subsec1proc}

In each of these experiments, we only complete an incomplete sender process, obtained in the same manner as for the many-process experiments on page~\pageref{subsec2procs}, namely, by removing some set of transitions from process \textit{Sender} of Fig.~\ref{fig:sndr_complete}.
The set of transitions removed depends on which states are designated as permutable for that experiment.
For instance, for $A = \{s_1,s_2\}$, we remove all incoming and outgoing transitions from both states $s_1$ and $s_2$,
and similarly for the other experiments. We use the completed ABP \textit{Receiver} shown in Fig.~\ref{fig:complete-abp-rec}.

The results for the three use cases (completion enumeration, realizable 1-completion, and unrealizable 1-completion) are presented in Tables~\ref{non-all-table},~\ref{non-one-table}, and~\ref{non-zero-table}, respectively.
Similar speedups can be observed as those in the many-process synthesis scenario (page~\pageref{subsec2procs}).

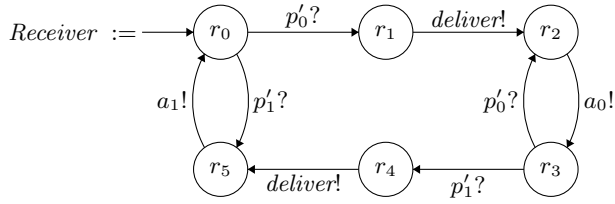
\begin{figure}[H]

\begin{center}
\begin{tikzpicture}[scale=\tiksize]
\tikzstyle{every node}+=[inner sep=0pt]
\draw [black] (22.5,-33.2) circle (3);
\draw (22.5,-33.2) node {$r_0$};
\draw [black] (40.8,-33.2) circle (3);
\draw (40.8,-33.2) node {$r_1$};
\draw [black] (59.1,-33.2) circle (3);
\draw (59.1,-33.2) node {$r_2$};
\draw [black] (22.5,-48.4) circle (3);
\draw (22.5,-48.4) node {$r_5$};
\draw [black] (40.8,-48.4) circle (3);
\draw (40.8,-48.4) node {$r_4$};
\draw [black] (59.1,-48.4) circle (3);
\draw (59.1,-48.4) node {$r_3$};
\draw [black] (25.5,-33.2) -- (37.8,-33.2);
\fill [black] (37.8,-33.2) -- (37,-32.7) -- (37,-33.7);
\draw (31.65,-32.7) node [above] {$p'_0?$};
\draw [black] (43.8,-33.2) -- (56.1,-33.2);
\fill [black] (56.1,-33.2) -- (55.3,-32.7) -- (55.3,-33.7);
\draw (49.95,-32.7) node [above] {$\textit{deliver}!$};
\draw [black] (60.871,-35.61) arc (28.43666:-28.43666:10.9);
\fill [black] (60.87,-45.99) -- (61.69,-45.53) -- (60.81,-45.05);
\draw (62.69,-40.8) node [right] {$a_0!$};
\draw [black] (20.721,-45.996) arc (-151.41152:-208.58848:10.859);
\fill [black] (20.72,-35.6) -- (19.9,-36.07) -- (20.78,-36.55);
\draw (18.9,-40.8) node [left] {$a_1!$};
\draw [black] (56.1,-48.4) -- (43.8,-48.4);
\fill [black] (43.8,-48.4) -- (44.6,-48.9) -- (44.6,-47.9);
\draw (49.95,-48.9) node [below] {$p'_1?$};
\draw [black] (24.271,-35.61) arc (28.43666:-28.43666:10.9);
\fill [black] (24.27,-45.99) -- (25.09,-45.53) -- (24.21,-45.05);
\draw (26.09,-40.8) node [right] {$p'_1?$};
\draw [black] (57.353,-45.973) arc (-152.04582:-207.95418:11.034);
\fill [black] (57.35,-35.63) -- (56.54,-36.1) -- (57.42,-36.57);
\draw (55.57,-40.8) node [left] {$p'_0?$};
\draw [black] (13.8,-33.2) -- (19.5,-33.2);
\draw (13.3,-33.2) node [left] {$\textit{Receiver}\mbox{ }:=$};
\fill [black] (19.5,-33.2) -- (18.7,-32.7) -- (18.7,-33.7);
\draw [black] (37.8,-48.4) -- (25.5,-48.4);
\fill [black] (25.5,-48.4) -- (26.3,-48.9) -- (26.3,-47.9);
\draw (31.65,-48.9) node [below] {$\textit{deliver}!$};
\end{tikzpicture}
\end{center}
    \caption{A completed ABP Receiver}
    \label{fig:complete-abp-rec}
\end{figure}

\begin{table}\centering
\scalebox{0.90}{%
\begin{tabular}{|r||rrr||rrr||rrr||rrr||}
\hline
&\multicolumn{3}{c||}{unopt.}&\multicolumn{3}{c||}{dead opt.}&\multicolumn{3}{c||}{naive opt.}&\multicolumn{3}{c||}{perm. opt.}\\
\hline
Case Study;
$A$
& sol.         & iter.   & time
& sol.         & iter.   & time
& sol.         & iter.   & time
& sol.        & iter.  & time  \\

 2PC;$\{p_1,p_2\}$ & 4 & 493 & 49 & 4 & 493 & 44 & 2 & 252 & 32 & 2 & 252 & 27 \\
 2PC;$\{p_2,p_3\}$ & 48 & 1121 & 113 & 4 & 1092 & 103 & 2 & 571 & 76 & 2 & 580 & 65 \\
 2PC;$\{p_3,p_4\}$ & 336 & 2247 & 208 & 6 & 1751 & 160 & 3 & 908 & 117 & 3 & 921 & 105 \\
 2PC;$\{p_4,p_8\}$ & 576 & 1587 & 150 & 4 & 974 & 91 & 2 & 492 & 67 & 2 & 492 & 52 \\
 ABP;$\{s_1,s_2\}$ & 64 & 391 & 20 & 8 & 336 & 13 & 4 & 171 & 12 & 4 & 171 & 7 \\
 ABP;$\{s_2,s_3\}$ & 64 & 1329 & 57 & 8 & 1277 & 51 & 4 & 673 & 45 & 4 & 646 & 27 \\
 ABP;$\{s_3,s_4\}$ & 32 & 363 & 16 & 4 & 306 & 11 & 2 & 160 & 10 & 2 & 166 & 7 \\
 ABP;$\{s_4,s_5\}$ & 32 & 3729 & 175 & 4 & 3686 & 169 & 2 & 1825 & 149 & 2 & 1786 & 88 \\
 ABP;$\{s_5,s_6\}$ & 64 & 383 & 25 & 8 & 333 & 17 & 4 & 166 & 16 & 4 & 167 & 10 \\
 ABP;$\{s_6,s_7\}$ & 64 & 1324 & 84 & 8 & 1245 & 76 & 4 & 634 & 71 & 4 & 628 & 40 \\
 2PC;$\{p_2,p_3,p_4\}$ & 2016 & 12793 & 1363 & 36 & 10650 & 1139 & 6 & 1842 & 509 & 6 & 1907 & 355 \\
 2PC;$\{p_3,p_4,p_8\}$ & \sfrac{87774}{-} & 103999 & TO & 36 & 14747 & 1551 & 6 & 2491 & 667 & 6 & 2484 & 439 \\
 ABP;$\{s_1,s_2,s_3\}$ & 192 & 3986 & 171 & 24 & 3845 & 153 & 4 & 673 & 116 & 4 & 652 & 35 \\
 ABP;$\{s_2,s_3,s_4\}$ & 3072 & 18504 & 1197 & 48 & 15790 & 775 & 8 & 2959 & 631 & 8 & 2665 & 196 \\
 ABP;$\{s_3,s_4,s_5\}$ & 96 & 14446 & 700 & 12 & 14083 & 675 & 2 & 2494 & 545 & 2 & 2508 & 165 \\
 ABP;$\{s_4,s_5,s_6\}$ & 1536 & 14422 & 866 & 24 & 12477 & 609 & 4 & 2382 & 543 & 4 & 2301 & 161 \\
 ABP;$\{s_5,s_6,s_7\}$ & 192 & 3970 & 257 & 24 & 3731 & 228 & 4 & 657 & 203 & 4 & 636 & 48 \\
 2PC;$\{p_1,p_2,p_3,p_4\}$ & 8064 & 50868 & 9354 & 144 & 42926 & 7557 & 6 & 1905 & 1956 & 6 & 1934 & 1124 \\
ABP;$\{s_1,s_2,s_3,s_4\}$ & 12288 & 74398 & 6270 & 192 & 61456 & 3975 & 8 & 3195 & 2743 & 8 & 2641 & 449 \\
ABP;$\{s_3,s_4,s_5,s_6\}$ & 6144 & 58044 & 4400 & 96 & 50824 & 3276 & 4 & 2554 & 2422 & 4 & 2130 & 351 \\
ABP;$\{s_4,s_5,s_6,s_7\}$ & \sfrac{512}{-} & 112328 & TO & \sfrac{24}{96} & 112762 & TO & \sfrac{4}{4} & 9963 & TO & 4 & 8144 & 6181 \\

\hline
\end{tabular}
}
\caption{1-Process Synthesis, Completion Enumeration
}
\label{non-all-table}
\end{table}

~
\begin{table}\centering
\begin{tabular}{|r||rr||rr||rr||}
\hline
&\multicolumn{2}{c||}{unopt.}&\multicolumn{2}{c||}{naive opt.}&\multicolumn{2}{c||}{perm. opt.}\\
\hline
Case Study;$A$                     & iter.          & time & iter.          & time  & iter.         & time \\

 2PC;$\{p_1,p_2\}$ & 185 & 18 & 125 & 17 & 125 & 14 \\
2PC;$\{p_2,p_3\}$ & 326 & 34 & 307 & 41 & 274 & 32 \\
2PC;$\{p_3,p_4\}$ & 387 & 39 & 408 & 52 & 395 & 43 \\
2PC;$\{p_4,p_8\}$ & 200 & 22 & 234 & 34 & 228 & 26 \\
ABP;$\{s_1,s_2\}$ & 91 & 3 & 83 & 5 & 79 & 3 \\
ABP;$\{s_2,s_3\}$ & 314 & 12 & 244 & 16 & 193 & 9 \\
ABP;$\{s_3,s_4\}$ & 99 & 4 & 76 & 5 & 83 & 4 \\
ABP;$\{s_4,s_5\}$ & 878 & 40 & 1395 & 111 & 939 & 46 \\
ABP;$\{s_5,s_6\}$ & 61 & 4 & 59 & 5 & 67 & 4 \\
ABP;$\{s_6,s_7\}$ & 395 & 22 & 288 & 30 & 301 & 18 \\
2PC;$\{p_2,p_3,p_4\}$ & 366 & 41 & 410 & 105 & 336 & 60 \\
2PC;$\{p_3,p_4,p_8\}$ & 747 & 76 & 849 & 211 & 557 & 90 \\
ABP;$\{s_1,s_2,s_3\}$ & 384 & 14 & 287 & 46 & 261 & 14 \\
ABP;$\{s_2,s_3,s_4\}$ & 2114 & 96 & 744 & 145 & 583 & 37 \\
ABP;$\{s_3,s_4,s_5\}$ & 1273 & 58 & 945 & 202 & 1320 & 83 \\
ABP;$\{s_4,s_5,s_6\}$ & 2466 & 114 & 1356 & 304 & 1095 & 72 \\
ABP;$\{s_5,s_6,s_7\}$ & 249 & 16 & 188 & 53 & 191 & 15 \\

 2PC;$\{p_1,p_2,p_3,p_4\}$ & 366 & 41 & 403 & 317 & 350 & 138 \\
ABP;$\{s_1,s_2,s_3,s_4\}$ & 2555 & 117 & 803 & 615 & 723 & 95 \\
ABP;$\{s_3,s_4,s_5,s_6\}$ & 3147 & 148 & 1525 & 1317 & 1161 & 169 \\
ABP;$\{s_4,s_5,s_6,s_7\}$ & 23353 & 1660 & 4082 & 4916 & 3969 & 904 \\

\hline
\end{tabular}
\caption{1-Process Synthesis, Realizable 1-Completion}
\label{non-one-table}
\end{table}

~
\begin{table}\centering
\begin{tabular}{|r||rr||rr||rr||}
\hline
&\multicolumn{2}{c||}{unopt.}&\multicolumn{2}{c||}{naive opt.}&\multicolumn{2}{c||}{perm. opt.}\\
\hline
Case Study;$A$                     & iter.          & time   & iter.          & time  & iter.         & time  \\

 2PC;$\{p_1,p_2\}$ & 3184 & 294 & 1638 & 211 & 1625 & 172 \\
2PC;$\{p_2,p_3\}$ & 8916 & 926 & 4645 & 632 & 4598 & 535 \\
2PC;$\{p_3,p_4\}$ & 13659 & 1410 & 6729 & 917 & 6820 & 814 \\
2PC;$\{p_4,p_8\}$ & 4609 & 456 & 2320 & 320 & 2317 & 257 \\
ABP;$\{s_1,s_2\}$ & 1229 & 47 & 630 & 43 & 628 & 26 \\
ABP;$\{s_2,s_3\}$ & 3405 & 149 & 1788 & 136 & 1728 & 81 \\
ABP;$\{s_3,s_4\}$ & 290 & 11 & 150 & 10 & 156 & 7 \\
ABP;$\{s_4,s_5\}$ & 3171 & 145 & 1565 & 130 & 1610 & 80 \\
ABP;$\{s_5,s_6\}$ & 275 & 14 & 142 & 13 & 137 & 8 \\
2PC;$\{p_2,p_3,p_4\}$ & 63991 & TO & 17894 & 10169 & 17927 & 8822 \\
2PC;$\{p_3,p_4,p_8\}$ & 65845 & TO & 23515 & TO & 23843 & 12547 \\
ABP;$\{s_1,s_2,s_3\}$ & 16797 & 851 & 2987 & 718 & 2773 & 198 \\
ABP;$\{s_2,s_3,s_4\}$ & 53138 & 3631 & 9879 & 2631 & 8088 & 694 \\
ABP;$\{s_3,s_4,s_5\}$ & 12407 & 616 & 2288 & 515 & 2205 & 145 \\
ABP;$\{s_4,s_5,s_6\}$ & 11202 & 564 & 2086 & 489 & 1997 & 140 \\

 2PC;$\{p_1,p_2,p_3,p_4\}$ & 60616 & TO & 9524 & TO & 11259 & TO \\
 ABP;$\{s_1,s_2,s_3,s_4\}$ & 120050 & TO & 11993 & TO & 12988 & 14626 \\
 ABP;$\{s_3,s_4,s_5,s_6\}$ & 43215 & 2577 & 2323 & 2173 & 1967 & 333 \\

\hline
\end{tabular}
\caption{1-Process Synthesis, Unrealizable 1-Completion.
The timeout is checked at the beginning of each iteration, so a program may terminate in just over 4 hours without timing out if the timeout occurs in the middle of the last iteration, as is the case for the results of the experiment where $A = \{s_1, s_2, s_3, s_4\}$ in the perm. opt. column.
}
\label{non-zero-table}
\end{table}

\end{document}